\newtheorem{theorem}{Theorem}
\newtheorem{lem}{Lemma}
\newcolumntype{R}[2]{%
    >{\adjustbox{angle=#1,lap=\width-(#2)}\bgroup}%
    l%
    <{\egroup}%
}
\newcommand*\rot{\multicolumn{1}{R{45}{1em}}}
\begin{document}

\title{Quantifying the Multi-Scale Performance of Network Inference Algorithms\footnote{Running title: Multi-Scale Scores for Network Inference}}
\author{Chris. J. Oates\thanks{Corresponding author: \texttt{c.oates@warwick.ac.uk}} and Simon E. F. Spencer \\ Department of Statistics, University of Warwick, Coventry, UK \and Richard Amos \\ The MathWorks, Cambridge, UK.}

\maketitle

\begin{abstract}
Graphical models are widely used to study complex multivariate biological systems.
Network inference algorithms aim to reverse-engineer such models from noisy experimental data.
It is common to assess such algorithms using techniques from classifier analysis.
These metrics, based on ability to correctly infer individual edges, possess a number of appealing features including invariance to rank-preserving transformation.
However, regulation in biological systems occurs on multiple scales and existing metrics do not take into account the correctness of higher-order network structure.

In this paper novel performance scores are presented that share the appealing properties of existing scores, whilst capturing ability to uncover regulation on multiple scales. 
Theoretical results confirm that performance of a network inference algorithm depends crucially on the scale at which inferences are to be made; in particular strong local performance does not guarantee accurate reconstruction of higher-order topology.
Applying these scores to a large corpus of data from the DREAM5 challenge, we undertake a data-driven assessment of estimator performance.
We find that the ``wisdom of crowds'' network, that demonstrated superior local performance in the DREAM5 challenge, is also among the best performing methodologies for inference of regulation on multiple length scales.

MATLAB R2013b code \verb+net_assess+ is provided as Supplement.

Key words: Performance assessment, multi-scale scores, network inference.
\end{abstract}

\section{Introduction}

Graphical representations of complex multivariate systems are increasingly prevalent within systems biology. In general a graph or \emph{network} $G = (V,E)$ is characterised by a set $V$ of vertices (typically associated with molecular species) and a set $E \subseteq V \times V$ of edges, whose interpretation will be context-specific.
In many situations the edge set or \emph{topology} $E \equiv E(G)$ is taken to imply conditional independence relationships between species in $V$ \citep{Pearl}.
For fixed and known vertex set $V$, the data-driven characterisation of network topology is commonly referred to as \emph{network inference}.

In the last decade many approaches to network inference have been proposed and exploited for several different purposes \citep{Oates}.
In some settings it is desired to infer single edges with high precision \citep[e.g.][]{Hill}, whereas in other applications it is desired to infer global connectivity, such as subnetworks and clusters \citep[e.g.][]{Breitkreutz}.
In cellular signalling systems, the scientific goal is often to identify a set of upstream regulators for a given target, each of which is a candidate for therapeutic intervention designed to modulate activity of the target \citep{Morrison,Winter}.
The output of network inference algorithms are increasingly used to inform the design of experiments \citep{Nelander,Hill} and may soon enter into the design of clinical trials \citep{Chuang,Heiser}.
It is therefore important to establish which network inference algorithms work best for each of these distinct scientific goals. 

Assessment of algorithm performance can be achieved {\it in silico} by comparing inferred networks to known data-generating networks. It can also be achieved using data obtained {\it in vitro}; however this requires that the underlying biology is either known by design \citep{Cantone}, well-characterised by interventional experiments \citep{Maathuis2}, or estimated from larger corpora of data \citep{Weile}. In either case an estimated network $\hat{G}$, typically represented as a weighted adjacency matrix, is compared against a known or assumed benchmark network $G$.
Community-wide blind testing of network inference algorithms is performed at the regular DREAM challenges \citep[see \url{http://www.the-dream-project.org/};][]{Marbach2012,Prill}. 

There is broad agreement in the network inference literature regarding the selection of suitable performance scores (described below), facilitating the comparison of often disparate methodologies across publications.
In this literature, the quality of an estimated network $\hat{G}$ with respect to a benchmark $G$ is assessed using techniques from classifier analysis. 
That is, each possible edge $(i,j) \in V \times V$ has an associated class label $Z(i,j) = \mathbb{I}\{(i,j) \in E(G)\}$, where $\mathbb{I}$ is the indicator function.
A network estimator $\hat{G}$ may then be seen as an attempt to estimate $Z(i,j)$ for each pair $(i,j)$.
Two of the main performance scores from classifier analysis are area under the receiver operating characteristic curve (AUROC) and area under the precision-recall curve (AUPR), though alternative performance scores for classification also exist (e.g. \cite{Drummond}).
These scores, formally defined in Sec.~\ref{classification}, are based on \emph{confusion matrices} of true/false positive/negative counts and represent essentially the only way to quantify performance at a local level (i.e. based on individual edges).
At present, performance assessment in the network inference literature does not typically distinguish between the various scientific purposes for which network inference algorithms are to be used.
Yet network inference algorithms are now frequently employed to perform diverse tasks, including identifying single edges with high precision \citep{Hill}, eliciting network motifs such as cliques \citep{Wang,Feizi} or learning a coherent global topology such as connected components \citep{Breitkreutz}.

Whilst performance for local (i.e. edge-by-edge) recovery is now standardised, there has been comparatively little attention afforded to performance scores that capture ability to recover higher-order features such as cliques, motifs and connectivity.
Recent studies, including \cite{Banerjee,Jurman,Jurman2}, proposed to employ spectral distances as a basis for comparing between two networks on multiple length scales.
In this article we present several additional multi-scale scores (MSSs) for network inference algorithms, each of which reflects ability to solve a particular class of inference problem.
Much of the popularity of existing scores derives from their objectivity, interpretability and invariance to rank-preserving transformation. 
Unlike previous studies, we restrict attention only to MSS that satisfy these desiderata.

The remainder of this paper proceeds as follows:
In Section \ref{methods} we formulate the assessment problem, state our desiderata and present novel performance scores that satisfy these requirements whilst capturing aspects of network reconstruction on multiple scales.
Using a large corpus of estimated and benchmark networks from the DREAM5 Challenge in Section \ref{results}, we survey estimator performance and conduct an objective, data-driven examination of the statistical power of each MSS.
The proposed MSSs provide evidence that the ``wisdom of crowds'' approach, that demonstrated superior (local) performance in the DREAM5 challenge, also offers gains on multiple length scales.
Sections \ref{discussion} and \ref{conclude} provide a discussion of our proposals and suggest directions for future work.
MATLAB R2013b code \verb+net_assess+ is provided in the Supplement, to accelerate the dissemination of ideas discussed herein.

\section{Methods} \label{methods}

We proceed as follows: Sections \ref{assumptions} and \ref{desiderata} clarify the context of the assessment problem for network inference algorithms amd  list certain desiderata that have contributed to the popularity of local scores.
Sections \ref{notation} and \ref{classification} introduce graph-theoretic notation and review standard performance assessment based on recovery of individual edges.
In Sections \ref{mss one} and \ref{mss two} we introduce several novel MSSs for assessment of network inference algorithms. 
We require each MSS to satisfy our list of desiderata; however these scores differ from existing scores by assessing inferred network structure on several (in fact all) scales. 
For each MSS we discuss associated theoretical and computational issues.
Finally Section \ref{test} describes computation of $p$-values for the proposed MSSs.

\subsection{Problem Specification} \label{assumptions}

Performance assessment for network inference algorithms may be achieved by comparing estimated networks against known benchmark information.
The interpretation of the estimated networks themselves has often been confused in existing literature, with no distinction draw between the contrasting notions of significance and effect size.
In this Section we therefore formally state our assumptions on the interpretation of both the benchmark network $G$ and the network estimators or estimates $\hat{G}$.

\begin{itemize}
\item [A1] All networks are directed, unsigned and contain no self-edges. 
\end{itemize}
A network is {\it signed} if each edge carries an associated $+/-$ symbol.
(A1) is widely applicable since an undirected edge may be recast as two directed edges and both signs and self-edges may simply be removed. 
The challenge of inferring signed networks and more generally the problem of predicting interventional effects requires alternative performance scores that are not dealt with in this contribution, but are surveyed briefly in Sec.~\ref{discussion}.
The preclusion of self-edges aids presentation but it not required by our methodology.

The form of this benchmark information will influence the choice of performance score and we therefore restrict attention to the most commonly encountered scenario:
\begin{itemize}
\item [A2] The benchmark network $G$ is unweighted.
\end{itemize}
A network is {\it weighted} if each edge has an associated weight $w \in \mathbb{R}$.
Note that the case of unweighted benchmark networks is widely applicable, since weights may simply be removed if necessary.
We will write $\mathcal{G}_0$ for the space of all directed, unweighted networks that do not contain self-edges and write $\mathcal{G}$ for the corresponding space of directed, weighted networks that do not contain self-edges.
\begin{itemize}
\item [A3] The benchmark network $G$ contains at least one edge and at least one non-edge. 
\end{itemize}
\begin{itemize}
\item [A4] Network estimators $\hat{G}$ are weighted ($\hat{G} \in \mathcal{G}$), with weights having the interpretation that larger values indicate a larger (marginal) probability of the corresponding edge being present in the benchmark network.
\end{itemize}
In particular we do not consider weights that instead correspond to effect size (see Sec.~\ref{discussion}). 
\begin{itemize}
\item [A5] In all networks, edges refer to a \emph{direct} dependence of the child on the parent at the level of the vertex set $V$; that is, not occurring via any other species in $V$. 
\end{itemize}
Assumptions (A1-5) are typical for comparative assessment challenges such as DREAM \citep{Marbach,Prill}.

\subsection{Performance Score Desiderata} \label{desiderata}

Fix a benchmark network $G$.
A {\it performance score} is defined as function $S:\mathcal{G} \times \mathcal{G}_0 \rightarrow \mathbb{R}$ that accepts an estimated network $\hat{G} \in \mathcal{G}$ and a benchmark network $G \in \mathcal{G}_0$ and returns a real value $S(\hat{G},G)$ that summarises some aspect of $\hat{G}$ with respect to $G$.
Examples of performance scores are given below.
Our approach revolves around certain desiderata that any (i.e. not just multi-scale) performance score $S$ ought to satisfy:
\begin{itemize}
\item [D0] (Interpretability) $S(\hat{G},G) \in [0,1]$ for all $\hat{G} \in \mathcal{G}$, $G \in \mathcal{G}_0$, with larger values corresponding to better performance at some specified aspect of network reconstruction.
\item [D1] (Computability) $S$ should be readily computable.
\item [D2] (Objectivity) $S$ should contain no user-specified parameters.
\end{itemize}
A network estimate $\hat{G} \in \mathcal{G}$ is called $S$-{\it optimal} for a benchmark network $G$ if it maximises the performance score $S(\cdot,G)$ over all networks in $\mathcal{G}$.
\begin{itemize}
\item [D3] (Consistency) The oracle estimator $\hat{G} = G$ is $S$-optimal.
\end{itemize}
A \emph{rank-preserving} transformation $\{x_i\} \mapsto \{x_i'\}$ of a collection of real values $x_i \in \mathbb{R}$ satisfies $x_i' < x_j'$ whenever $x_i < x_j$.
\begin{itemize}
\item [D4] (Invariance) $S$ should be invariant to rank-preserving transformations of the weights associated with an estimate $\hat{G}$. 
\end{itemize}
The criteria (D0-3) are important for practical reasons; (D4) is more technical and reflects the fact that we wish to compare estimators whose weights need not belong to the same metric space. i.e. different algorithms may be compared on the same footing, irrespective of the actual interpretation of edge weights (subject to (A4)).
Given that much of the popularity of standard classifier scores derives from (D0-4), it is important that any proposed MSS also satisfies the above desiderata.
As we will see below, previous studies, such as \cite{Banerjee,Jurman,Jurman2,Peters}, do not satisfy (D2,4), precluding their use in objective assessment such as the DREAM Challenges.
In Secs. \ref{mss one}, \ref{mss two} below we present novel MSS that satisfy each of (D1-4).

\subsubsection{Examples of Performance Scores} \label{related}

The local performance scores that have become a standard in the literature, defined in Sec.~\ref{classification} below, are easily shown to satisfy the above desiderata.
To date, the challenge of assessing network inference algorithms over multiple scales has received little statistical attention; yet there are several general proposals for quantifying higher order network structure. Notably \cite{Jurman2} assessed spectral distances for suitability in application to biological networks. The authors recommended use of the Ipsen-Mikhailov distance (IMD) due to its perceived stability and robustness properties, as quantified on randomly generated and experimentally obtained networks \citep{Jurman}.
IMD evaluates the difference of the distribution of Laplacian eigenvalues between two networks and therefore could be considered an MSS.
There has, to date, been no examination of IMD and related spectral metrics in the context of performance assessment for network inference applications.
However IMD and related constructions \citep[e.g.][]{Banerjee} fail to satisfy the above desiderata: (i) IMD itself fails to satisfy (D4) since eigenvalues are not invariant to rank-preserving transformations; (ii) the IMD formula contains a user-specified parameter, invalidating (D2).

\cite{Peters} recently introduced ``Structural Intervention Distance'' (SID; actually a pre-distance) which interprets inferred networks $\hat{G}^{\tau}$ as estimators of an underlying causal graph $G$, where both $\hat{G}^{\tau}$ and $G$ must be acyclic \citep{Pearl}.
Specifically, SID is the count of pairs of vertices $(i,j)$ for which the estimate $\hat{G}^{\tau}$ incorrectly predicts intervention distributions (in the sense of \cite{Pearl}) within the class of distributions that are Markov with respect to $G$.
As such, SID is closely related to the challenge of predicting the effect of unseen interventions.
However SID fails to satisfy (D0) due to the requirement that estimated and benchmark networks must be acyclic (i.e. a causal interpretation will not be justified in general) and is therefore unsuitable for this application.

\subsection{Notation} \label{notation}

Below we introduce notation that will be required to define our proposed scores:
Throughout this paper an unweighted network $G \in \mathcal{G}_0$ on vertices $V = \{1,2,\dots,p\}$ (where $p < \infty$) is treated as a binary matrix $G \in \{0,1\}^{p \times p}$ with $(i,j)$th entry denoted $G(i,j)$.
When $G(i,j) = 1$ or $G(i,j) = 0$ we say that $G$ contains or does not contain the edge $(i,j)$, respectively.
Write $G(\bullet,j) = \{i \in V : G(i,j) = 1\}$ for the set of parents for vertex $j \in V$.
In this paper we do not allow self-edges (A1), so that $G(i,i) = 0$ for all $i \in V$.
A path $P$ from $i$ to $j$ in $G \in \{0,1\}^{p \times p}$ is characterised by a sequence of vertices 
\begin{eqnarray}
P = (p_0,p_1,\dots,p_{m-1},p_m) \in V^{m+1}, \; \; \; p_0 = i, \; p_m = j
\end{eqnarray}
such that $G(p_{k-1},p_k) = 1$ and $p_k \neq i$ for all $k \geq 1$. 
Note that a path $P$ may contain cycles, but these cycles may not involve $p_0$.
We say that this path has length $\ell(P) = m$ where $1 \leq m < \infty$. Let $G(i \rightarrow j)$ denote the set of all paths in $G$ from $i$ to $j$, so that in particular $G(i \rightarrow i) = \emptyset$.

We identify weighted, unsigned networks $\hat{G} \in \mathcal{G}$ with non-negative real-valued matrices $\hat{G} \in [0,1]^{p \times p}$ and denote the $(i,j)$th entry by $\hat{G}(i,j)$.
Should a network inference procedure produce edge weights in $[0,\infty)$ then by dividing through by the largest weight produces a network with weights in $[0,1]$. This highlights the importance of property (D4), invariance to rank preserving transformations.
It will be required to map $\hat{G}$ into the space of unweighted networks by thresholding the weights at a certain level $\tau$. 
For $0 \leq \tau \leq 1$ we write $\hat{G}^{\tau}$ for the unweighted network corresponding to a matrix with entries $\hat{G}^{\tau}(i,j) = \mathbb{I} \{\hat{G}(i,j) \geq \tau\}$.

\subsection{Local Scores} \label{classification}

In this Section we briefly review local scoring, which has become an established standard in the network inference literature.

\noindent{\bf Definition:} Classification performance scores are defined as functions of confusion matrices, that count the number of true positive (TP), false positive (FP), true negative (TN) and false negative (FN) calls produced by a classifier. 
From (A4) it follows that the $k$ largest entries in a network estimate $\hat{G}$ correspond to the $k$ pairs of vertices with largest marginal probabilities of being present as edges in the benchmark network $G$; it is therefore reasonable to threshold entries of $\hat{G}$, say at a level $\tau$, and to consider confusion matrices corresponding to the classifier $\hat{G}^{\tau}$.
In the standard case of {\it local} estimation, confusion matrices are given by TP$(\tau) = \sum_{i,j} \hat{G}^{\tau}(i,j)G(i,j)$, FP$(\tau) = \sum_{i,j} \hat{G}^{\tau}(i,j)(1-G(i,j))$, TN$(\tau) = \sum_{i,j} (1-\hat{G}^{\tau}(i,j))(1-G(i,j))$ and FN$(\tau) = \sum_{i,j} (1-\hat{G}^{\tau}(i,j))G(i,j)$.
Based on these quantities, (local) performance scores may be defined.
In particular, one widely used score is the area under the receiver operating characteristic (ROC) curve
\begin{eqnarray}
S_{\text{local}}^{\text{ROC}} = \int \text{TPR} \; d\text{FPR}
\end{eqnarray} 
where $\text{TPR} = \text{TP}/(\text{TP}+ \text{FN})$ is the {\it true positive rate} and $\text{FPR} = \text{FP}/(\text{FP}+ \text{TN})$ is the {\it false positive rate}.
$S_{\text{local}}^{\text{ROC}}$ has an interpretation as the probability that a randomly selected pair from $\{(i,j) \; : \; G(i,j) = 1\}$ is assigned a higher weight $\hat{G}(i,j)$ that a randomly selected pair from $\{(i,j) \; : \; G(i,j) = 0\}$ \citep{Fawcett}. As such $S_{\text{local}}^{\text{ROC}}$ takes values in $[0,1]$ with 1 representing perfect performance and $1/2$ representing performance that is no better than chance.
For finitely many test samples, $S_{\text{local}}^{ROC}$ curves may be estimated by linear interpolation of points $(\text{FPR}(\tau),\text{TPR}(\tau))$ in ROC-space. 

Precision-recall (PR) curves are an alternative to ROC curves that are useful in situations where the underlying class distribution is skewed, for example when the number of negative examples greatly exceeds the number of positive examples \citep{Davis}. 
For biological networks that exhibit sparsity, including gene regulatory networks \citep{Tong} and metabolic networks \citep{Jeong}, the number of positive examples (i.e. edges) is frequently smaller than the number of negative examples (i.e. non-edges).
In this case performance is summarised by the area under the PR curve
\begin{eqnarray}
S_{\text{local}}^{\text{PR}} = \int \text{PPV} \; d\text{TPR} ,
\end{eqnarray}
where $\text{PPV} = \text{TP}/(\text{TP}+\text{FP})$ is the {\it positive predictive value}.
(In this paper we adopt the convention that $\text{PPV} = 0$ whenever $\text{TP}+\text{FP} = 0$.)
$S_{\text{local}}^{\text{PR}}$ also takes values in $[0,1]$, with larger values representing better performance.
For finitely many test samples in PR-space, unlike ROC-space, linear interpolation leads to over-optimistic assessment of performance. The MATLAB code provided in the Supplement follows \cite{Goadrich} in using nonlinear interpolation to achieve correct, unbiased estimation of area under the PR curve.

\noindent{\bf Captures:} Local scores capture the ability of estimators to recover the exact placement of individual edges. Both $S_{\text{local}}^{\text{ROC}}$ and $S_{\text{local}}^{\text{PR}}$ summarise the ability to correctly infer local topology across a range of thresholds $\tau$, with PR curves prioritising the recovery of positive examples in the ``top $k$ edges'' \citep{Fawcett}.

\noindent{\bf Desiderata:} (D1; computability), (D2; objectivity) and (D3; consistency) are clearly satisfied.  
(D4; invariance) is satisfied, since the image of a parametric curve is invariant to any monotone transformation of the parameter (in this case $\tau$).

\subsection{Multi-Scale Score 1 (MSS1)} \label{mss one}

We now introduce the first of our MSSs, which targets ability to infer the connected components of the benchmark network $G$, through the existence or otherwise of directed paths between vertices.

\noindent{\bf Definition:} In the notation of Sec.~\ref{notation}, local performance scores are based on estimation of class labels $Z(i,j) = G(i,j)$ associated with individual edges.
A natural generalisation of this approach is to assign labels $Z(i,j) \in \{0,1\}$ to pairs of vertices $(i,j) \in V \times V$, such that $i \neq j$, based on {\it descendancy}; that is, based on the presence or otherwise of a directed path from vertex $i$ to vertex $j$ in the benchmark network. i.e. $Z(i,j) = \mathbb{I}\{G(i \rightarrow j) \neq \emptyset\}$.
By comparing descendancies in $\hat{G}^{\tau}$ against descendancies in $G$, we can compute confusion matrices and, by analogy with local scores, we can construct area-under-the-curve scores by allowing the threshold $\tau$ to vary. We denote these scores respectively as $S_{\text{MSS1}}^{\text{ROC}}$ and $S_{\text{MSS1}}^{\text{PR}}$.
By analogy with local scores, both $S_{\text{MSS1}}^{\text{ROC}}$, $S_{\text{MSS1}}^{\text{PR}}$ take values in $[0,1]$, and $S_{\text{MSS1}}^{\text{ROC}}$ is characterised as the probability that a randomly selected pair from $\{(i,j) \; : \; G(i \rightarrow j) \neq \emptyset\}$ is assigned a higher weight 
\begin{eqnarray}
\tau_{i,j} = \max_{P \in \hat{G}^{\tau}(i \rightarrow j)} \min_{0 \leq i \leq \ell(P)} \hat{G}(p_i,p_{i+1})
\end{eqnarray}
than a randomly selected pair from $\{(i,j) \; : \; G(i \rightarrow j) = \emptyset\}$.

\noindent{\bf Captures:} MSS1 captures the ability to identify ancestors and descendants of any given vertex. MSS1 scores therefore capture the ability of estimators to recover connected components on all length scales.

\noindent{\bf Desiderata:} (D1; computability) is satisfied due to the well-known Warshall algorithm for finding the transitive closure of a directed, unweighted network \citep{Warshall}, with computational complexity $\mathcal{O}(p^3)$.
For the estimated network $\hat{G}$ it is required to know, for each pair $(i,j)$ the value of $\tau_{i,j}$, i.e. the largest value of the threshold $\tau$ at which $\hat{G}^{\tau}(i \rightarrow j) \neq \emptyset$.
To compute these quantities we generalised the Warshall algorithm to the case of weighted networks; see Sec.~\ref{theory results}.
(D2-4) are automatically satisfied analogously to local scores, by construction of confusion matrices and area under the curve statistics.

\subsection{Multi-Scale Score 2 (MSS2)} \label{mss two}

Whilst MSS1 scores capture the ability to infer connected components, they do not capture the graph theoretical notion of {\it differential connectivity} (i.e. the minimum number of edges that must be removed to eliminate all paths between a given pair of vertices).
Our second proposed MSS represents an attempt to explicitly prioritise pairs of vertices which are highly connected over those pairs with are weakly connected:

\noindent{\bf Definition:} For each pair $(i,j) \in V \times V$ we will compute an \emph{effect} $0 \leq e_{ij} \leq 1$ that can be thought of as the importance of variable $i$ on the regulation of variable $j$ according to the network $G$ (in a global sense that includes indirect regulation).
To achieve this we take inspiration from recent work by \cite{Feiglin} as well as \cite{Morrison, Winter}, who exploit spectral techniques from network theory.
Since the effect $e_{ij}$, which is defined below, includes contributions from all possible paths from $i$ to $j$ in the network $\hat{G}$, it explicitly captures differential connectivity. 

We formally define effects for an arbitrary unweighted network $H$ (which may be either $G$ or $\hat{G}^{\tau}$).
Specifically the effect $e_{ij}$ of $i$ on $j$ is defined as the sum over paths
\begin{eqnarray}
e_{ij} = \delta_{ij} + \sum_{P\in H(i \rightarrow j)} \prod_{k=1}^{\ell(P)} \frac{1}{|H(\bullet,p_k)|} 
\label{effects def}
\end{eqnarray}
where $\delta_{ij}$ is the Kronecker delta.
Effects $e_{ij}$ quantify direct and indirect regulation on all length scales. To illustrate this, notice that Eqn.~\ref{effects def} satisfies the recursive property
\begin{eqnarray}
e_{ij} = \frac{1}{|H(\bullet,j)|} \sum_{k \in H(\bullet,j)} e_{ik}
\label{recurse}
\end{eqnarray}
(see Sec.~\ref{theory results}).
Intuitively, Eqn.~\ref{recurse} states that the fraction of $j$'s behaviour explained by $i$ is related to the combined fractions of $j$'s parents' behaviour that are explained by $i$.
Rephrasing, in order to explain the behaviour of $j$ it is sufficient to explain the behaviour of each of $j$'s parents.
Moreover, if a parent $k$ of $j$ is an important regulator (in the sense that $j$ has only a small number of parents) then the effect $e_{ik}$ of $i$ on $k$ will contribute significantly to the combined effect $e_{ij}$.
Eqn.~\ref{recurse} is inspired by \cite{Page} and later \cite{Morrison, Winter}, but differs from these works in two important respects: (i) For biological networks it is more intuitive to consider normalisation over incoming edges rather than outgoing edges, since some molecular species may be more influential than others.
Mathematically, \cite{Page} corresponds to replacing $|H(\bullet,j)|$ in Eqn.~\ref{recurse} with $|H(i,\bullet)|$.
(ii) \cite{Page} employed a ``damping factor'' that imposed a multiplicative penalty on longer paths, with the consequence that effects were readily proved to exist and be well-defined. In contrast our proposal does not include damping on longer paths (see discussion of D2 below) and the theory of \cite{Page} and others does not directly apply in this setting.

Below we write $\bm{e} = \{e_{ij}\}$ for the matrix that collects together all effects for the benchmark network $G$; similarly denote by $\hat{\bm{e}}^{\tau}$ the matrix of effects for the thresholded estimator $\hat{G}^{\tau}$.
Any well-behaved measure of similarity between $\hat{\bm{e}}^{\tau}$ and $\bm{e}$ may be used to define a MSS. 
We constructed an analogue of a confusion matrix as TP$= \sum_{i,j} \hat{e}_{ij}^{\tau}e_{ij}$, FP$=\sum_{i,j} \hat{e}_{ij}^{\tau}(1-e_{ij})$, TN$=\sum_{i,j} (1-\hat{e}_{ij}^{\tau})(1-e_{ij})$, FN$=\sum_{i,j} (1-\hat{e}_{ij}^{\tau})e_{ij}$.
Repeating the construction across varying threshold $\tau$, we compute analogues of ROC and PR curves. 
(Note that, unlike conventional ROC curves, the curves associated with MSS2 need not be monotone; see Supp. Sec.~\ref{AUC con}.)
Finally scores $S_{\text{MSS2}}^{\text{ROC}}$, $S_{\text{MSS2}}^{\text{PR}}$ are computed as the area under these curves respectively.

\noindent{\bf Captures:}
MSS2 is a spectral method, where larger scores indicate that the inferred network $\hat{G}$ better captures the {\it eigenflows} of the benchmark network $G$ \citep{Lakhina}.
In general neither $S_{\text{MSS2}}^{\text{ROC}}$ nor $S_{\text{MSS2}}^{\text{PR}}$ need have a unique maximiser (see Sec.~\ref{theory results}). As such, MSS2 scores do not require precise placement of edges, provided that higher-order topology correctly captures differential connectivity. 

\noindent{\bf Desiderata:}
Eqn.~\ref{effects def} involves an intractable summation over paths: Nevertheless (D1; computability) is ensured by an efficient iterative algorithm related but non-identical to \cite{Page}, described in Sec.~\ref{theory results}.
In order to ensure (D2; objectivity) we did not include a damping factor that penalised longer paths, since the amount of damping would necessarily depend on the nature of the data and the scientific context.
It is important, therefore, to establish whether effects $e_{ij}$ are mathematically well defined in this objective limit.
This paper contributes novel mathematical theory to justify the use of MSS2 scores and prove the correctness of the associated algorithm (see Sec.~\ref{theory results}).
As with MSS1, the remaining desiderata (D3-4) are satisfied by construction.

\subsection{Significance Levels} \label{test}

In performance assessment of network inference algorithms we wish to test the null hypothesis $H_0$ : $\hat{G} \sim \mathcal{M}_0$ for an appropriate null model $\mathcal{M}_0$.
We will construct a one-sided test based on a performance score $S$ and we reject the null hypothesis $H_0$ when $S(\hat{G},G) \in (s^*,1]$ for an appropriately chosen critical value $s^*$.
In general the distribution of a score $S(R,G)$ under a network-valued random variable $R$ displays a nontrivial dependence on the benchmark network $G$. 
We therefore follow \cite{Marbach2012} and propose a Monte Carlo approach to compute significance levels, though alternative approaches exist including \cite{Scutari}. 
Specifically, significance of an inferred network $\hat{G}$ under a null model $\mathcal{M}_0$ is captured by the $p$-value $p(\hat{G}) := \mathbb{P}(S(R,G) \geq S(\hat{G},G))$ and given by the strong law of large numbers as
\begin{eqnarray}
\frac{1}{n} \sum_{i=1}^n \mathbb{I}(S(R_i,G) \geq S(\hat{G},G)) \overset{a.s.}{\rightarrow}  p(\hat{G}) \; \; \; \; \; \text{ as } n \rightarrow \infty
\end{eqnarray}
where the $R_i$ are independent samples from $\mathcal{M}_0$. 

The choice of null model $\mathcal{M}_0$ is critical to the calculation and interpretation of $p$-values $p(\hat{G})$. 
In most biological applications, the null would ideally encompass biological sample preparation, experimental data collection, data preprocessing and network estimation; however in practice it is convenient to define a null model $\mathcal{M}_0 \equiv \mathcal{M}_0(\hat{G})$ conditional upon the inferred network $\hat{G}$ \citep[e.g.][]{Marbach}.
For brevity, we restrict attention to the following choice: A random network $R = (\pi(V),E(\hat{G}))$ is obtained from $\hat{G}$ by applying a uniformly random permutation $\pi$ to the vertex labels $V$; that is, $R$ is a uniform sample from the space of graph isomorphisms of $\hat{G}$.
This choice has the interpretation that, for deterministic network inference algorithms, the null $\mathcal{M}_0(\hat{G})$ corresponds to randomly permuting the variable labels in the experimental dataset, prior to both data preprocessing and network inference.
Note that $\mathcal{M}_0(\hat{G})$ results in a trivial ($R_i$-independent) hypothesis test when $\hat{G}$ is either empty or complete; we do not consider these degenerate cases in this paper.

In Sec.~\ref{signif} we empirically assess the power of this test based on both existing and proposed scores $S$.
MATLAB R2013b code \verb+net_assess+, that was used to compute all of the scores and associated $p$-values used in this paper, is provided in the Supplement.

\section{Results} \label{results}

In Sec.~\ref{theory results} below we present theoretical results relating to the proposed scores, demonstrating that MSS and local scores can yield arbitrarily different conclusions in settings where local topology is recovered very well but higher-order topology is recovered very poorly and vice versa.
In Sec.~\ref{DREAM info}, in order to empirically assess the statistical power our proposed scores, we appeal to the large corpus of data available from the DREAM community, that represents a large and representative sample of network estimators used by the community.

\subsection{Theoretical Results} \label{theory results}

Whilst distributional results are difficult to obtain, we are able to characterise the behaviour of MSS in both favourable and unfavourable limits.
Our first result proves non-uniqueness of $S$-optimal estimates:

\begin{theorem}[Non-uniqueness of $S$-optimal networks] \label{nonunique}
For each score $S_{\text{MSS1}}^{\text{ROC}}$, $S_{\text{MSS1}}^{\text{PR}}$, $S_{\text{MSS2}}^{\text{ROC}}$ and $S_{\text{MSS2}}^{\text{PR}}$, the benchmark network $G$ is always $S$-optimal (D3). However, for MSS1 and MSS2, $S$-{\it optimal} estimates are not unique. 
\end{theorem}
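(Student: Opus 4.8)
The plan is to establish $S$-optimality of the benchmark network via the characterisation already provided, and then to construct, for each score, a second optimal network distinct from $G$. The first claim is essentially immediate: for MSS1 the score $S_{\text{MSS1}}^{\text{ROC}}$ is the probability that a descendant pair of $G$ receives a higher value $\tau_{i,j}$ than a non-descendant pair, and taking $\hat{G} = G$ makes $\tau_{i,j} \in \{0,1\}$ with the indicator of descendancy, so every descendant pair strictly beats every non-descendant pair and the score attains $1$; the PR score similarly attains $1$. For MSS2, observe that the effect matrix $\bm{e}$ of $G$ has entries in $[0,1]$ (this follows from the recursion in Eqn.~\ref{recurse} and a short induction on path length, which I would state as a preliminary fact, or simply cite from Sec.~\ref{theory results}), and that $\hat{G} = G$ gives $\hat{\bm{e}}^{\tau} = \bm{e}$ for $\tau$ in the appropriate range; then TP is maximised and FP minimised simultaneously, and the ROC and PR curves pass through the optimal corner, so both MSS2 scores attain their maximum. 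In all four cases this is exactly (D3), consistency.

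For non-uniqueness, the key observation is that all four scores depend on $\hat{G}$ only through path-existence (MSS1) or through the effect matrix $\hat{\bm{e}}^{\tau}$ (MSS2), and neither of these determines $\hat{G}$ uniquely. For MSS1 I would exhibit a small explicit benchmark $G$ — for instance a directed path $1 \to 2 \to 3$ on $p = 3$ vertices — and note that the network $G'$ obtained by adding the edge $(1,3)$ has exactly the same set of descendant pairs as $G$, hence the same values $\tau_{i,j}$ under the oracle-type weighting and the same confusion matrices at every threshold, so $S_{\text{MSS1}}^{\text{ROC}}(G',G) = S_{\text{MSS1}}^{\text{ROC}}(G,G) = 1$ and likewise for PR, while $G' \neq G$. (One must check $G' \in \mathcal{G}_0 \subset \mathcal{G}$, which is clear: it is directed, can be given unit weights, and has no self-edges.) For MSS2 I would similarly find two distinct unweighted networks with the same effect matrix: the recursion Eqn.~\ref{recurse} shows $\bm{e}$ is a fixed point determined by the parent sets $H(\bullet,j)$, and there is freedom because, for example, reversing or rerouting an edge between two vertices that have no other parents and identical downstream structure can leave every $|H(\bullet,k)|$ and the induced linear system unchanged — I would pin this down with a concrete two- or three-vertex example (e.g. the two networks on $\{1,2\}$ given by the single edge $1\to 2$ versus the single edge $2 \to 1$ have effect matrices that are transposes of one another, and by choosing $G$ so that this transpose symmetry is invisible to the confusion-matrix construction, or by using a symmetric gadget, both achieve the maximal score).

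The main obstacle is the MSS2 construction: unlike MSS1, where "same transitive closure" is an obviously non-injective and easy-to-exploit reduction, for MSS2 I need two genuinely different unweighted networks whose \emph{entire} effect matrices coincide (so that TP, FP, TN, FN agree at the relevant thresholds), and the effect matrix is a global, somewhat rigid object — generic perturbations of the graph change it. The resolution is to look for structural symmetries or degeneracies: pairs of vertices that are interchangeable in the sense that swapping them fixes all parent-set cardinalities and the solution of the linear system, or "redundant" edges whose removal does not alter any $|H(\bullet,k)|$ for $k$ on a path (there are none if the edge is the only parent, but one can instead add a parallel route through a vertex configured so the cardinalities balance). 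I would search among graphs on $3$ or $4$ vertices for the smallest such pair, verify by direct computation of Eqn.~\ref{effects def} (a finite linear solve) that the two effect matrices agree, and conclude that both are $S$-optimal for the common benchmark $G$ taken equal to one of them. A symmetry argument (a nontrivial graph automorphism of an auxiliary structure, or vertex-relabelling invariance exploited as in Sec.~\ref{test}) is likely the cleanest way to guarantee the match without grinding through the arithmetic.
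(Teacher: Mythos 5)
Your treatment of (D3) and of MSS1 non-uniqueness is sound, and your MSS1 example differs from the paper's: you add a transitively-implied edge to a directed path $1\to 2\to 3$, whereas the paper reverses a $3$-cycle ($A\to B\to C\to A$ versus $A\leftarrow B\leftarrow C\leftarrow A$); both preserve the descendancy relation exactly, so either works for MSS1.

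The genuine gap is the MSS2 half of the non-uniqueness claim, which you explicitly leave as a search problem rather than a proof. Your candidate two-vertex example fails for exactly the reason you half-notice: the networks $1\to 2$ and $2\to 1$ have effect matrices that are transposes of one another, and the MSS2 confusion matrix $\mathrm{TP}=\sum_{i,j}\hat e_{ij}e_{ij}$ is \emph{not} invariant under transposing $\hat{\bm e}$ (with benchmark $1\to 2$ one gets $\mathrm{TP}=3$ for $\hat G=G$ but $\mathrm{TP}=2$ for the reversal), so the reversed edge is not $S$-optimal. Note also that your MSS1 example does not rescue you here: adding $(1,3)$ to $1\to 2\to 3$ changes $|H(\bullet,3)|$ from $1$ to $2$ and hence changes $e_{23}$ from $1$ to $1/2$. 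The missing idea — which your ``symmetric gadget'' remark gestures at without delivering — is to use a directed cycle and its reversal as the pair for MSS2 as well: in a $3$-cycle every vertex has in-degree $1$ and every ordered pair $(i,j)$ is joined by exactly one path (cycles through $p_0$ being forbidden), so every effect $e_{ij}$ equals $1$; the same holds for the reversed cycle, the two effect matrices coincide (both all-ones), and both networks are therefore $S$-optimal for either taken as benchmark. This single pair disposes of MSS1 and MSS2 simultaneously, which is how the paper closes the argument in one line.
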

\noindent This result contrasts with local performance scores, where $S_{\text{local}}^{\text{ROC}}$ and $S_{\text{local}}^{\text{PR}}$ are uniquely maximised by the benchmark network $G$. Intuitively, Theorem \ref{nonunique} demonstrates that it is possible to add or remove individual edges from a network without changing its higher-order features, such as its connected components.

Our second result proves that the MSS are fundamentally different from existing scores by demonstrating it is possible to simultaneously recover local topology with arbitrary accuracy yet fail to correctly accurately recover the higher-order topology (and vice versa):

\begin{theorem}[MSS are distinct from local scores] \label{different}
(a) For any $\epsilon > 0$ there exist a benchmark network $G \in \mathcal{G}_0$ and an estimate $\hat{G} \in \mathcal{G}$ that satisfy (i) $S_{\text{local}}^{\text{ROC}}$, $S_{\text{local}}^{\text{PR}} >1-\epsilon$, (ii) $S_{\text{MSS1}}^{\text{ROC}}$, $S_{\text{MSS2}}^{\text{ROC}} < \frac{1}{2} + \epsilon$, and (iii) $S_{\text{MSS1}}^{\text{PR}}$, $S_{\text{MSS2}}^{\text{PR}} < \epsilon$.

(b) For any $\epsilon > 0$ there exist a benchmark network $G \in \mathcal{G}_0$ and an estimate $\hat{G} \in \mathcal{G}$ that satisfy (i) $S_{\text{MSS1}}^{\text{ROC}}$, $S_{\text{MSS2}}^{\text{ROC}}$, $S_{\text{MSS1}}^{\text{PR}}$, $S_{\text{MSS2}}^{\text{PR}} >1-\epsilon$, (ii) $S_{\text{local}}^{\text{ROC}} < \frac{1}{2} + \epsilon$, and (iii) $S_{\text{local}}^{\text{PR}} < \epsilon$.
\end{theorem}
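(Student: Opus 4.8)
\noindent The plan is to prove both parts by exhibiting explicit one‑parameter families $(G,\hat G)$, indexed by the vertex count $p$, with a few internal parameters tuned to $\epsilon$, and to verify the six inequalities by direct computation. The mechanism is that $S_{\text{local}}^{\text{ROC}},S_{\text{local}}^{\text{PR}}$ depend only on how the weights of $\hat G$ interleave true edges of $G$ with non‑edges, whereas the MSS depend on the transitive closure of $\hat G^{\tau}$ and on the effect matrices $\hat{\bm e}^{\tau}$, and these are highly nonlinear in the edge set: altering a vanishing fraction of a graph's edges can create or destroy a constant fraction of its reachable pairs, and can move its effect matrix between extremes. The constructions are chosen to make that mismatch as large as possible.

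For part (b) I would take $G$ to be a directed cycle on $p-1$ of the vertices together with one isolated vertex, and $\hat G$ the weighted complete digraph on those $p-1$ vertices with a single common weight $c\in(0,1)$, all entries incident to the isolated vertex set to $0$. The two facts to establish are: (i) the effect matrix of a complete digraph $K_m$ is identically $1$, and the effect matrix of $G$ is $1$ on every ordered pair within the cycle and on the diagonal and $0$ on pairs joining the isolated vertex to a cycle vertex --- both follow at once from \eqref{recurse} (for the cycle every in‑degree is $1$, so the recursion just walks around the loop; for $K_m$ one solves a single symmetric linear equation, equivalently sums the geometric series of path weights); and (ii) for $\tau\in(0,c]$ the transitive closure of $\hat G^{\tau}$ equals that of $G$, every widest‑path value $\tau_{ij}$ equalling $c$ on reachable pairs and $0$ elsewhere. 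By (i)--(ii), at any $\tau\in(0,c]$ both the MSS1 and MSS2 confusion matrices are those of a perfect classifier, so $S_{\text{MSS1}}^{\text{ROC}}=S_{\text{MSS1}}^{\text{PR}}=S_{\text{MSS2}}^{\text{ROC}}=S_{\text{MSS2}}^{\text{PR}}=1$. Locally, however, each of the $p-1$ true edges has weight $c$, tied with the $\Theta(p^{2})$ non‑edges inside the clique, so $S_{\text{local}}^{\text{ROC}}=\tfrac{p+1}{2(p-1)}\to\tfrac12$, and since the positive class has $p-1$ members out of $\Theta(p^{2})$ the precision never exceeds $1/(p-2)$, whence $S_{\text{local}}^{\text{PR}}\le 1/(p-2)\to 0$; taking $p$ large completes (b).

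For part (a) I would take $G$ to be a directed path on $\ell+1$ vertices plus $p-\ell-1$ isolated vertices, with $\sqrt p\ll\ell\ll p$ (say $\ell=\lceil p^{3/4}\rceil$). Fix $\delta\in(0,1)$; in $\hat G$ keep all but $t$ roughly evenly spaced path edges at weight $1$, with $1\ll t\ll\ell$ (say $t=\lceil p^{1/4}\rceil$), and assign the common weight $1-\delta$ to \emph{every} other ordered pair, including the $t$ demoted path edges. Then at $\tau\in(1-\delta,1]$ the graph $\hat G^{\tau}$ is the disjoint union of the $t+1$ short path fragments (only a $\tfrac1{t+1}$ fraction of $G$'s reachable pairs survive, all with widest‑path weight $1$, while every other pair --- in particular every non‑reachable pair of $G$ --- has weight $1-\delta$), whereas at $\tau\in(0,1-\delta]$ it is the complete digraph, with effect matrix identically $1$ and closure everything. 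Computing from this: for MSS1 the ROC curve has a single nontrivial bend, giving $S_{\text{MSS1}}^{\text{ROC}}=\tfrac12+\tfrac1{2(t+1)}\to\tfrac12$, while its PR curve attains precision $1$ only up to recall $\tfrac1{t+1}$ and thereafter collapses to precision $O(\ell^{2}/p^{2})$, so $S_{\text{MSS1}}^{\text{PR}}\le\tfrac1{t+1}+o(1)\to 0$; the same two behaviours hold for MSS2 once one checks that the always‑correct diagonal of the effect matrix is only an $O(p/\ell^{2})=o(1)$ fraction of the MSS2 positive class (this is exactly where $\ell\gg\sqrt p$ is used) and that the constant‑effect computation of (b) applies to $\hat G^{\tau}$ at small $\tau$. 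Locally, the $\ell-t$ weight‑$1$ true edges beat all $\Theta(p^{2})$ filler edges outright and the $t$ demoted ones only tie, so $S_{\text{local}}^{\text{ROC}}=1-\tfrac t{2\ell}\to 1$; and thresholding at any $\tau\in(1-\delta,1]$ returns exactly the $\ell-t$ kept edges, i.e.\ precision $1$ at recall $1-t/\ell$, so $S_{\text{local}}^{\text{PR}}\ge 1-t/\ell\to 1$.

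The main obstacle is the threshold‑window phenomenon in part (a): if $\hat G$ contained all of $G$'s edges at any common weight, then $\hat G^{\tau}$ would equal $G$ over an interval of $\tau$ and the MSS would score $1$ automatically, so $\hat G$ must omit some reachability‑critical edges of $G$ and yet still order edges almost perfectly --- resolved by demoting $t=o(\ell)$ evenly spaced path edges (few enough not to hurt the local scores, many enough to shatter the path's reachability) and by keeping the filler strictly below weight $1$ so that arbitrarily much spurious connectivity can be added without disturbing the local ordering. The remaining delicate points, all to be checked explicitly, are: the joint constraints $\sqrt p\ll\ell\ll p$ and $1\ll t\ll\ell$ are genuinely forced (reachability must stay sparse relative to $p^{2}$ for the PR scores, $\ell$ must beat $\sqrt p$ so that the ever‑present diagonal of the effect matrix does not inflate $S_{\text{MSS2}}^{\text{ROC}}$ toward $1$, and $t$ must simultaneously tend to infinity and be $o(\ell)$); and the nonlinear Goadrich--Davis PR interpolation over the long recall interval where MSS precision collapses does not inflate the area, since the false‑positive slope there is of order $p^{2}/\ell^{2}\to\infty$, pushing the interpolated precision below any fixed level almost immediately.
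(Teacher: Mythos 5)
Your constructions are correct, but they differ substantially from the paper's, especially in part (a). The paper's proof of (a) uses the two-layer hub $E(G)=\{(1,i)\}\cup\{(i,p)\}$ and takes $\hat G$ to be $G$ plus the \emph{single} extra edge $(p,1)$, all at a common weight: one spurious edge out of $\Theta(p)$ leaves the local confusion matrix essentially perfect, yet it closes a cycle through the hub so that every pair becomes reachable and $\hat{\bm e}=\bm 1_{p\times p}$, collapsing both MSSs. For (b) the paper simply reverses a directed cycle, so the edge sets are disjoint (local TP $=0$) while reachability and effects are identical. Your route is genuinely different: in (a) you demote $t$ evenly spaced edges of a long path and flood all non-edges at a slightly lower common weight, so the damage to the MSS comes from a two-threshold structure (fragmentation above $1-\delta$, total over-connection below it) rather than from a single reachability-critical false positive; in (b) you drown the true cycle in a tied complete digraph rather than reversing it. What the paper's approach buys is economy: a single unweighted graph differing from $G$ by one edge, verified in a few lines with no asymptotic tuning. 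What your approach buys is a stronger illustration of the phenomenon --- it shows the local/multi-scale mismatch arises even when the estimator's top-ranked edges are a strict \emph{subset} of the true edges (no false positives at the top of the ranking), at the price of having to manage the joint asymptotics $\sqrt p\ll\ell\ll p$, $1\ll t\ll\ell$, the diagonal contribution to the MSS2 positive class, and the Goadrich-interpolated PR tail; your treatment of each of these checks out. One small remark: your stated worry that ``if $\hat G$ contained all of $G$'s edges at a common weight then the MSS would score $1$ automatically'' is only true when the true edges strictly dominate all others --- the paper sidesteps it by letting the one spurious edge tie with the true ones, which is why its simpler construction suffices.
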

An important consequence of this result is that estimators that perform well locally, including DREAM Challenge winners, may not be appropriate for scientific enquiry regarding non-local network topology.
Given the increasing use of network inference algorithms in systems biology \citep[e.g.][]{Chuang,Breitkreutz}, this highlights the need to study performance of network inference algorithms on multiple length scales.

Our final theoretical contribution is to propose and justify efficient algorithms for computation of both MSS1 and MSS2.
For MSS1 we generalise the Warshall algorithm to compute the transitive closure of a directed, weighted graph:

\begin{theorem}[Computation of MSS1] \label{Warshall}
For a fixed weighted network $H$, let $\tau_{i,j} = \max\{\tau : H^{\tau}(i \rightarrow j) \neq \emptyset\}$ be the largest value of the threshold parameter $\tau$ for which there exists a path from $i$ to $j$ in the network $H^{\tau}$. Then $\bm{\tau} = \{\tau_{i,j}\}$ may be computed as follows:
\begin{algorithmic}    
\STATE $\bm{\tau} = H$              
\FOR{$k \leftarrow 1:p$}
\FOR{$i \leftarrow 1:p$}
\FOR{$j \leftarrow 1:p$}
\STATE $\tau_{i,j} \leftarrow \max(\tau_{i,j},\min(\tau_{i,k},\tau_{k,j}))$    
\ENDFOR
\ENDFOR
\ENDFOR
\RETURN $\bm{\tau}$
\end{algorithmic}
\end{theorem}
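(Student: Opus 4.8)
The plan is to recognise $\tau_{i,j}$ as the maximum \emph{bottleneck weight} of a path from $i$ to $j$ in $H$, and then to prove correctness of this weighted analogue of Warshall's algorithm by induction on the outer loop counter $k$, exactly as in the classical proof of the Floyd--Warshall algorithm. The first step is a combinatorial reformulation. For a path $P = (p_0,\dots,p_m)$ write $w(P) := \min_{1 \le l \le m} H(p_{l-1},p_l)$. Since $H^{\tau}(u,v) = \mathbb{I}\{H(u,v) \ge \tau\}$ is nonincreasing in $\tau$, a fixed vertex sequence $P$ belongs to $H^{\tau}(i \to j)$ precisely when $\tau \le w(P)$; hence, for $i \ne j$, the set $\{\tau : H^{\tau}(i \to j) \ne \emptyset\}$ is either $\{0\}$ or an interval $[0,c]$, and $\tau_{i,j} = \max\{ w(P) : P \text{ a walk from } i \text{ to } j \}$, with the convention $\max\emptyset = 0$ (consistent with the initialisation $\bm{\tau} = H$, since $H \ge 0$ with zero diagonal; the diagonal entries play no role in MSS1 and need not be tracked). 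Because inserting a cycle into a walk never increases its bottleneck weight, excising cycles turns any such walk into a simple path with the same endpoints, bottleneck weight at least as large, and vertex set a subset of the original; therefore the maximum is attained on a simple path and $\tau_{i,j} = \max\{w(P) : P \text{ a simple path } i \to j\}$.

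Next I would introduce the loop invariant. Let $\tau_{i,j}^{(k)}$ denote the value stored in entry $\tau_{i,j}$ once the outer loop has completed iterations $1,\dots,k$, with $\tau_{i,j}^{(0)} = H(i,j)$. The claim, proved by induction on $k$, is that for $i \ne j$
\[
\tau_{i,j}^{(k)} = \max\bigl\{\, w(P) : P \text{ a simple path } i \to j \text{ with all internal vertices in } \{1,\dots,k\} \,\bigr\},
\]
where a direct edge contributes $H(i,j)$, there being no internal vertices, and $\max\emptyset = 0$; the base case $k = 0$ is the initialisation. For the inductive step the key observation is that during iteration $k$ the $k$th row and $k$th column of $\bm{\tau}$ are frozen: the tentative update of $\tau_{i,k}$ is $\max(\tau_{i,k},\min(\tau_{i,k},\tau_{k,k})) = \tau_{i,k}$, and similarly for $\tau_{k,j}$ and $\tau_{k,k}$. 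Consequently, whatever order the inner loops use, iteration $k$ performs for every pair $\tau_{i,j}^{(k)} = \max\bigl(\tau_{i,j}^{(k-1)},\, \min(\tau_{i,k}^{(k-1)},\tau_{k,j}^{(k-1)})\bigr)$, with $\tau_{i,k}$ and $\tau_{k,j}$ still holding their end-of-iteration-$(k-1)$ values.

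The two inclusions are then routine. For ``$\ge$'', take an admissible simple path $P$ from $i$ to $j$: if $k$ is not an internal vertex of $P$ then $w(P) \le \tau_{i,j}^{(k-1)}$ by the inductive hypothesis, while if $k$ is internal then, $P$ being simple, it is visited exactly once and splits $P$ into a simple $i \to k$ path and a simple $k \to j$ path whose internal vertices lie in $\{1,\dots,k-1\}$, so $w(P) = \min(w(P_1),w(P_2)) \le \min(\tau_{i,k}^{(k-1)},\tau_{k,j}^{(k-1)})$; either way $w(P) \le \tau_{i,j}^{(k)}$. For ``$\le$'', $\tau_{i,j}^{(k-1)}$ is realised by an admissible simple path by the inductive hypothesis, and $\min(\tau_{i,k}^{(k-1)},\tau_{k,j}^{(k-1)})$ is the bottleneck weight of the concatenation of a bottleneck-optimal simple $i \to k$ path with a bottleneck-optimal simple $k \to j$ path, a walk from $i$ to $j$ whose internal vertices all lie in $\{1,\dots,k\}$; excising its cycles yields a simple path with bottleneck weight at least as large and no new vertices, hence an admissible simple path witnessing the value. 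This establishes the invariant; putting $k = p$, every vertex is admissible as an internal vertex, so $\tau_{i,j}^{(p)} = \max\{w(P) : P \text{ a simple path } i \to j\} = \tau_{i,j}$ by the reformulation, and the three nested length-$p$ loops with $O(1)$ work per iteration give the $\mathcal{O}(p^3)$ bound already quoted.

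The step I expect to require the most care is the one responsible for correctness of the \emph{in-place} recursion: one must check that the values $\tau_{i,k}$ and $\tau_{k,j}$ read when updating $\tau_{i,j}$ within iteration $k$ are the end-of-iteration-$(k-1)$ values rather than partially updated ones, which is exactly what the ``frozen $k$th row and column'' observation guarantees. A secondary subtlety, used in both the reformulation and the ``$\le$'' inclusion, is that a concatenation of two simple paths need not be simple, so one must pass through the cycle-excision step reducing arbitrary walks to simple paths without decreasing the bottleneck weight.
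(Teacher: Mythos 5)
Your proof is correct and follows essentially the same route as the paper's: both define $\tau_{i,j}^{(k)}$ as the optimal value over paths whose intermediate (internal) vertices lie in $\{1,\dots,k\}$ and establish the Floyd--Warshall recursion $\tau_{i,j}^{(k)} = \max\bigl(\tau_{i,j}^{(k-1)}, \min(\tau_{i,k}^{(k-1)},\tau_{k,j}^{(k-1)})\bigr)$ by splitting a path at its visit to vertex $k$. Your treatment is somewhat more careful than the paper's on two points it handles only implicitly or in post-proof remarks --- the reduction of walks to simple paths via cycle excision, and the validity of the in-place update via the frozen $k$th row and column --- but the underlying argument is the same.
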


For MSS2, following \cite{Page} we formulate effects $e_{ij}$ as solutions to an eigenvalue problem. However our approach is non-identical to \cite{Page} (see Sec.~\ref{mss two}); as a consequence we must prove that (i) effects are mathematically well-defined, and (ii) an appropriately modified version of the algorithm of \cite{Page} converges to these effects.
Below we contribute the relevant theory:

\begin{theorem}[Computation of MSS2] \label{compute}
The effects $e_{i,j}$ for an unweighted network $H$ exist, are unique, satisfy Eqn.~\ref{recurse} and may be computed as
\begin{eqnarray}
\bm{e}_i = \lim_{n \rightarrow \infty} \bm{v}^{(n)}, \; \; \; \bm{v}^{(n)} = \frac{\bm{M}^{(i)}\bm{v}^{(n-1)}}{\|\bm{M}^{(i)}\bm{v}^{(n-1)}\|}, \; \; \; v^{(0)}_k = \delta_{ik}
\label{algo}
\end{eqnarray}
where $\bm{e}_i$ is the $i$th row of $\bm{e}$ and $\bm{M}^{(i)}$ is defined by (i) $M_{im}^{(i)} = \delta_{im}$, (ii) $M_{km}^{(i)} = \frac{1}{|H(\bullet,k)|}$ for all $m \in H(\bullet,k)$ and $H(i \rightarrow k) \neq \emptyset$, (iii) $M_{km}^{(i)} = 0$ otherwise.
\end{theorem}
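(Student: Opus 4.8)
The plan is to reinterpret each effect as an absorption probability of the random walk that runs \emph{backwards} along the edges of $H$, and then to read off all four assertions from this picture together with a short spectral analysis of $\bm{M}^{(i)}$.

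First I would fix the source $i$ and introduce the Markov chain $(X_t)_{t\ge 0}$ on $V$ that makes $i$ absorbing, moves from any $k\ne i$ with $H(\bullet,k)\ne\emptyset$ to a uniformly random parent $m\in H(\bullet,k)$, and is killed from any remaining state (i.e.\ from $k\ne i$ with no parents). Reversing the vertex order of a path gives a bijection between $P=(i=p_0,\dots,p_m=j)\in H(i\to j)$ and trajectories $j=X_0,X_1,\dots,X_m=i$ that first hit $i$ at time $m$ (the constraint $p_k\ne i$ for $k\ge1$ being exactly ``$i$ not visited before time $m$''), under which $\prod_{k=1}^{\ell(P)}|H(\bullet,p_k)|^{-1}$ equals the probability of that trajectory. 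Summing over $m$, for $j\ne i$ the series defining $e_{ij}$ is $\mathbb{P}_j(X_t\text{ is absorbed at }i)$, while $e_{ii}=1$ trivially. The partial sums over paths of length $\le N$ are probabilities of pairwise disjoint events, hence non-negative, non-decreasing and bounded by $1$, so the series converges absolutely to a well-defined value in $[0,1]$ --- this gives existence and uniqueness of $\bm e$. A one-step analysis of the chain (equivalently, grouping the paths in $H(i\to j)$ by their penultimate vertex, with the trivial path at $i$ accounting for the $\delta$ term) then yields Eqn.~\ref{recurse} for every $j\ne i$ that has at least one parent; I would flag that the identity is not asserted at $j=i$, and that the $i$th row of $\bm{M}^{(i)}$ is set to $\bm f_i^\top$, the $i$th coordinate functional, precisely to accommodate this.

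Next I would identify $\bm{M}^{(i)}$ with the sub-stochastic transition matrix of this chain, noting that the rows annihilated by case (iii) belong to vertices from which $i$ is unreachable (if the backward chain from $k$ could reach $i$ then $i$ would reach $k$), for which $e_{ik}=0$ anyway, so nothing is lost. Then $\bm{M}^{(i)}\bm{e}_i=\bm{e}_i$ is immediate: the $i$th coordinate reproduces $e_{ii}=1$, the coordinates indexed by the descendants of $i$ reproduce Eqn.~\ref{recurse}, and the rest give $0=e_{ik}$. Since every $e_{ij}\in[0,1]$ with $e_{ii}=1$, we have $\|\bm e_i\|_\infty=1$, which matches the normalisation implicit in Eqn.~\ref{algo}.

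The technical core is the spectral claim. Writing $D_i=\{i\}\cup\{k:H(i\to k)\ne\emptyset\}$ and ordering $V$ as $\big(i,\,D_i\setminus\{i\},\,V\setminus D_i\big)$ puts $\bm{M}^{(i)}$ into the form $\left(\begin{smallmatrix}1&\bm 0&\bm 0\\ \bm a&B&C\\ \bm 0&\bm 0&\bm 0\end{smallmatrix}\right)$, whose characteristic polynomial factors as $(1-\lambda)\det(B-\lambda I)(-\lambda)^{|V\setminus D_i|}$. The key lemma is $\rho(B)<1$: otherwise the non-negative Perron eigenvector of $B$ has an argmax set $S\subseteq D_i\setminus\{i\}$ that is non-empty and closed under taking parents ($H(\bullet,k)\subseteq S$ for every $k\in S$), but then choosing $k\in S$ and a shortest path from $i$ to $k$, its penultimate vertex lies in $S$, hence is $\ne i$, hence the path has length $\ge 2$, and iterating this descent drives the length down to $1$, forcing some vertex of $S$ to have $i$ as a parent --- a contradiction. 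Consequently $1$ is a simple eigenvalue of $\bm{M}^{(i)}$, strictly dominant in modulus, with right eigenspace $\operatorname{span}(\bm e_i)$ and matching left eigenvector $\bm f_i$ (the identity $\bm w^\top(I-B)=\bm 0$ forces $\bm w$ to be supported at $i$ alone, and $\bm f_i^\top\bm e_i=e_{ii}=1$). Hence $\bm{M}^{(i),n}\to\bm e_i\bm f_i^\top$, so $\bm{M}^{(i),n}\bm v^{(0)}=\bm{M}^{(i),n}\bm f_i\to\bm e_i$; and since row $i$ of $\bm{M}^{(i)}$ reads off the $i$th coordinate, $(\bm{M}^{(i),n}\bm v^{(0)})_i=1$ for every $n$, so the normalisers stay $\ge 1$ and tend to $\|\bm e_i\|_\infty=1$, giving $\bm v^{(n)}\to\bm e_i$. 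Uniqueness of $\bm e_i$ as the boundary-pinned solution of Eqn.~\ref{recurse} drops out of the same computation. I expect the lemma $\rho(B)<1$ --- equivalently, that no non-empty set of descendants of $i$ is closed under parents --- to be the only step that is not essentially bookkeeping.
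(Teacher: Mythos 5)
Your proof is correct, and while it shares the paper's overall skeleton---reduce everything to the fixed-point equation $\bm{M}^{(i)}\bm{v}=\bm{v}$ with $v_i=1$, and control rogue solutions by showing that a coordinate of maximal modulus propagates to all parents and can then be marched back to $i$ along a path---it differs in two substantive ways. First, for existence and the bound $0\le e_{ij}\le 1$ the paper inducts on the partial sums over paths of length at most $n$ (Lemma~\ref{converge}), whereas you identify $e_{ij}$ with an absorption probability of the time-reversed uniform-parent chain, so boundedness comes from disjointness of the first-hitting events; this also turns the verification of the recursion into a one-step first-hitting decomposition, which automatically absorbs the separate case $i\in H(\bullet,j)$ that the paper must treat by hand in Lemma~\ref{equivalent}. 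Second, and more importantly, the paper's treatment of the convergence of the iteration in Eqn.~\ref{algo} stops at ``$1$ is an eigenvalue with a one-dimensional eigenspace'' (Lemmas~\ref{character}--\ref{nonzero} and Lemma~5) and defers the convergence of the power method to \cite{Berkhin}; your block-triangularisation over $\bigl(\{i\},\,D_i\setminus\{i\},\,V\setminus D_i\bigr)$ together with the lemma $\rho(B)<1$ supplies exactly the missing spectral gap (a simple, strictly dominant eigenvalue $1$ with projection $\bm{e}_i\bm{f}_i^{\top}$), so on this point you prove strictly more than the paper does. Your argument for $\rho(B)<1$ is sound: equality in the sub-stochastic bound at an argmax vertex forces all its parents into the argmax set and excludes $i$ from being one of them, and a shortest path from $i$ into that set then gives the contradiction by descent. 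The only loose end is the unspecified norm in Eqn.~\ref{algo}: your observation that the $i$th coordinate of the unnormalised iterates is pinned at $1$ settles the claim for the supremum norm, and for any other norm the limit is $\bm{e}_i$ only up to the scalar $\|\bm{e}_i\|^{-1}$, which is as much as the statement can mean.
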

\noindent Theorem \ref{compute} facilitates highly efficient numerical computation of the effect matrices $\bm{e}$ required for MSS2.
In practice Eqn.~\ref{algo} requires convergence diagnostics; for experiments in this paper we used the stopping rule $\|\bm{v}^{(n)}-\bm{v}^{(n-1)}\|_1 < 0.01 \times p$.

\subsection{Empirical Results} \label{DREAM info}

For an objective comparison covering a wide range of network inference methodologies, we exploited the results of the DREAM5 challenge, where 36 methodologies were blind-tested against both simulated and real transcription factor data \citep{Marbach2012}.
The network inference challenge included (i) {\it in silico} data generated using GeneNetWeaver \citep{Marbach2009}, (ii) experimentally validated interactions from a curated database for {\it E. coli}, and (iii) transcription-factor binding data and evolutionarily conserved binding motifs for {\it S. Cerevisiae}.
Data were provided on the genomic scale, involving (i) 1643, (ii) 4297 and (iii) 5667 genes respectively.
For each of these systems, benchmark information was provided in the form of targets for a subset of genes. 
For assessment purposes we restricted attention to predictions made regarding these subnetworks only, satisfying (A1-3), consisting of (i) 195, (ii) 334 and (iii) 333 genes respectively.
The DREAM5 data are well-suited for this investigation, representing a wide variety of network reconstruction algorithms contributed by the community and containing sufficiently many samples to provide a definitive assessment of statistical power.

In each challenge, participants were required to provide a list of inferred edges $(i,j)$, together with associated weights $0 \leq \hat{G}(i,j) \leq 1$.
Such edges were directed, unsigned and excluded self-edges, satisfying (A1).
Participants were asked to ``{\it infer the structure of the underlying transcriptional regulatory networks}'', so that edges were ranked in terms of their probability of existence, rather than their inferred effect size (A4). 
For DREAM5, (A5; direct dependence) appears to be implicit.
The algorithms used to obtain edge lists and weights varied considerably, being classified according to their statistical formulation as either {\it Bayesian}, {\it Correlation}, {\it Meta}, {\it Mutual Information} (MI), {\it Regression} or {\it Other}.
For each methodology we obtained performance scores for datasets (i-iii) using the \verb+net_assess+ package (Supp. Figs. \ref{silico bar}, \ref{ecoli bar}, \ref{sc bar}). 
We also investigated the Overall Score reported by \cite{Marbach2012}, that summarises performance by combining local AUROC and AUPR scores across all 3 datasets (further details in Supp. Sec.~\ref{overall def}).

In Section \ref{participant} below we examine the characteristics of our proposed scores in an aggregate sense based on the full DREAM5 data.
Section \ref{indiv} then considers estimator-specific score profiles in order to understand the multi-scale properties of different network inference algorithms.
In particular Section \ref{community} focusses on a ``wisdom of crowds'' approach that has previously demonstrated strong local performance in this setting.
Finally Section \ref{signif} quantifies the statistical power of our proposed scores in the context of hypothesis testing.

\subsubsection{Aggregate Results} \label{participant}

Local and MSS scores may be arbitrarily different in principle (Theorem \ref{different}), however in practice these scores may be highly correlated.
To investigate this we produced scatter plots between the existing and proposed performance scores over all 36 estimators and all 3 datasets (a total of $n = 108$ samples; Fig.~\ref{DREAM}, Supp. Fig.~\ref{DREAM2}).
Our findings, based on Spearman correlation (Table \ref{corr table}), showed that all scores were significantly correlated under AUPR on the {\it in silico} and {\it E. Coli} datasets ($p \leq 0.01$), but not on the {\it S. Cereviviae} dataset. Indeed results for the {\it S. Cerevisiae} dataset were less impressive for all methods, under all performance scores, in line with the general poor performance on this dataset reported by \cite{Marbach2012}. 
On the combined data, both MSS1 and MSS2 was significantly correlated with the local score under AUROC ($\rho = 0.84$, $\rho = 0.57$; $p < 10^{-10}$ in each case) suggesting that accurate reconstruction of local topology is associated with accurate reconstruction of higher-order topology.
On the other hand, on all 3 datasets and both AUROC and AUPR statistics, MSS2 was less highly correlated with the local score compared to MSS1 (e.g. on the combined data and AUPR, $\rho = 0.82$ versus $\rho = 0.29$).
This suggests that estimated local topology is more predictive of descendancy relationships than predictive of differential connectivity and spectral features.
In many cases MSS1 and MSS2 were significantly correlated; this was expected since both estimators target multiple scales by design.

\begin{figure}[t]
\centering
\includegraphics[width = \textwidth]{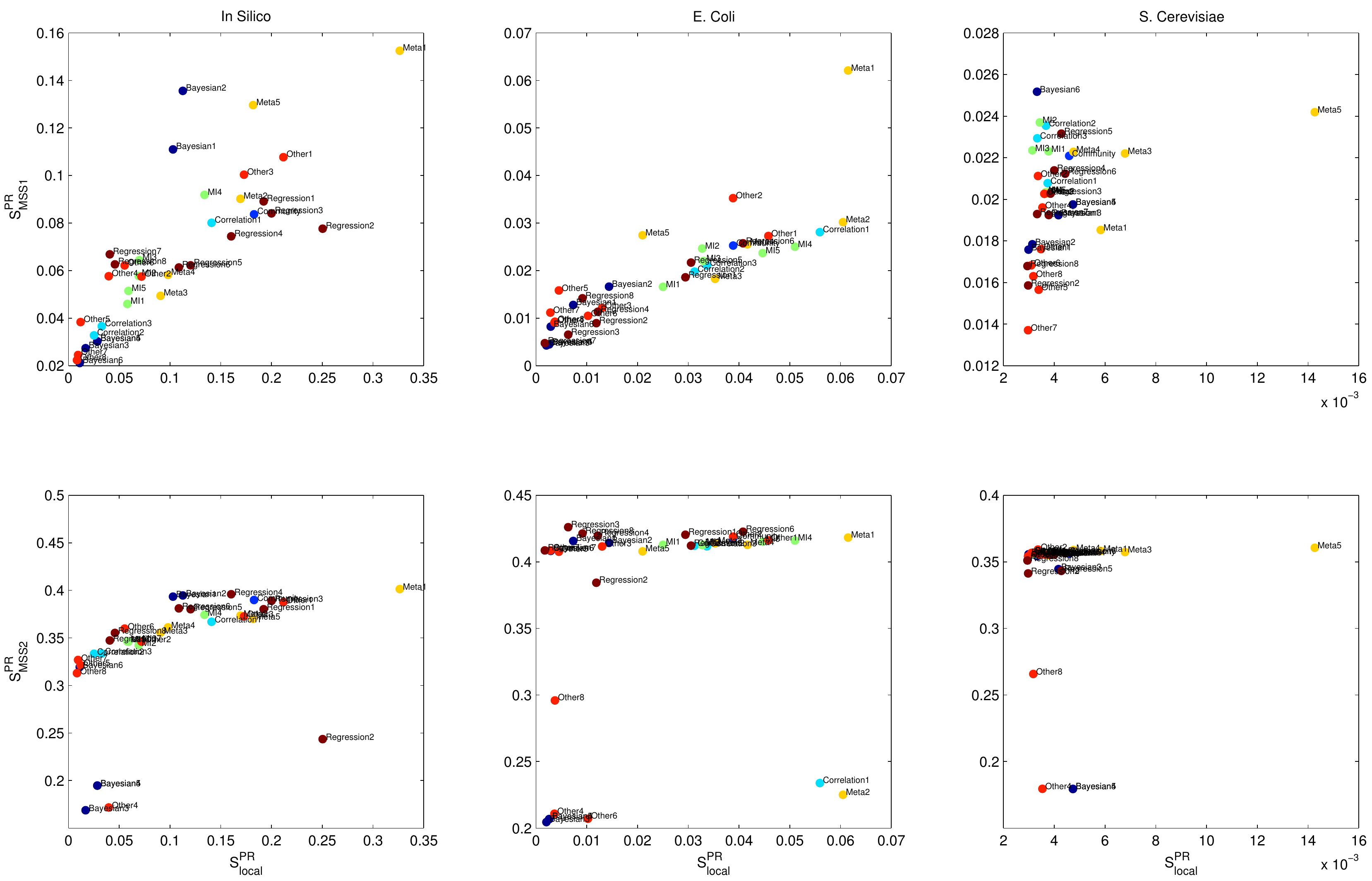}
\caption{DREAM5 network inference challenge data. [The 36 DREAM5 methodologies were assessed using both local and multi-scale performance scores. Participants were grouped according to their statistical formulation as either {\it Bayesian}, {\it Correlation}, {\it Meta}, {\it Mutual Information} (MI), {\it Regression} or {\it Other}.
{\it Community} represents a crowd-sourced network estimator proposed by \cite{Marbach2012}.
Left to right: {\it In silico} dataset, {\it E. Coli} dataset, {\it S. Cerevisiae} dataset.
Here we show results based on area under the precision recall curve.]}
\label{DREAM}
\end{figure}

\begin{table}[t]
\centering
\resizebox{0.9\textwidth}{!}{
\begin{tabular}{|c|cccc|cccc|l|}
\multicolumn{1}{r}{} & \rot{Local} & \rot{Marbach {\it et al.}} & \rot{MSS1} & \rot{MSS2} & \rot{Local} & \rot{Marbach {\it et al.}} & \rot{MSS1} & \rot{MSS2} & \multicolumn{1}{r}{} \\ \hline
& & 0.60 & 0.80 & 0.64 & & 0.78 & 0.87 & 0.77 & Local \\
{\it In Silico} & \cellcolor{black!25} $< 10^{-4}$ & & 0.53 & 0.78 & \cellcolor{black!25} $< 10^{-10}$ & & 0.53 & 0.52 & Marbach {\it et al.} \\
($n = 36$) & \cellcolor{black!25} $< 10^{-10}$ & \cellcolor{black!25} $< 10^{-3}$ & & 0.51 & \cellcolor{black!25} $< 10^{-20}$ & \cellcolor{black!25} $< 10^{-3}$ & & 0.82 & MSS1 \\
& \cellcolor{black!25} $< 10^{-4}$ & \cellcolor{black!25} $< 10^{-10}$ & \cellcolor{black!25} $< 10^{-2}$ & & \cellcolor{black!25} $< 10^{-10}$ & \cellcolor{black!25} $< 10^{-2}$ & \cellcolor{black!25} $< 10^{-10}$ & & MSS2 \\ \hline
& & 0.42 & 0.92 & 0.13 & & 0.58 & 0.92 & 0.44 & Local \\
{\it E. Coli} & \cellcolor{black!25} 0.01 & & 0.27 & 0.46 & \cellcolor{black!25} $< 10^{-3}$ & & 0.50 & 0.54 & Marbach {\it et al.} \\
($n = 36$) & \cellcolor{black!25} $< 10^{-20}$ & \cellcolor{black!25} 0.11 & & 0.05 & \cellcolor{black!25} $< 10^{-20}$ & \cellcolor{black!25} $< 10^{-2}$ & & 0.42 & MSS1 \\
& \cellcolor{black!25} 0.46 & \cellcolor{black!25} $< 10^{-2}$ & \cellcolor{black!25} 0.77 & & \cellcolor{black!25} $< 10^{-2}$ & \cellcolor{black!25} $< 10^{-3}$ & \cellcolor{black!25} 0.01 & & MSS2 \\ \hline
& & 0.26 & 0.70 & 0.17 & & 0.29 & 0.44 & 0.18 & Local \\
{\it S. Cerevisiae} & \cellcolor{black!25} 0.13 & & 0.13 & 0.45 & \cellcolor{black!25} 0.09 & & 0.22 & 0.23 & Marbach {\it et al.} \\
($n = 36$) & \cellcolor{black!25} $< 10^{-5}$ & \cellcolor{black!25} 0.46 & & -0.06 & \cellcolor{black!25} $< 10^{-2}$ & \cellcolor{black!25} 0.19 & & 0.32 & MSS1 \\
& \cellcolor{black!25} 0.31 & \cellcolor{black!25} $< 10^{-2}$ & \cellcolor{black!25} 0.74 & & \cellcolor{black!25} 0.29 & \cellcolor{black!25} 0.17 & \cellcolor{black!25} 0.06 & & MSS2 \\ \hline
& & 0.29 & 0.84 & 0.57 & & 0.37 & 0.82 & 0.29 & Local \\
Combined & \cellcolor{black!25} $< 10^{-2}$ & & 0.28 & 0.34 & \cellcolor{black!25} $< 10^{-4}$ & & 0.27 & 0.36 & Marbach {\it et al.} \\
($n = 108$) & \cellcolor{black!25} $< 10^{-20}$ & \cellcolor{black!25} $< 10^{-2}$ & & 0.40 & \cellcolor{black!25} $< 10^{-20}$ & \cellcolor{black!25} $< 10^{-2}$ & & 0.04 & MSS1 \\
& \cellcolor{black!25} $< 10^{-10}$ & \cellcolor{black!25} $< 10^{-3}$ & \cellcolor{black!25} $< 10^{-4}$ & & \cellcolor{black!25} $< 10^{-2}$ & \cellcolor{black!25} $< 10^{-3}$ & \cellcolor{black!25} 0.69 & & MSS2 \\ \hline
\multicolumn{1}{r}{} & \multicolumn{4}{c}{AUROC} & \multicolumn{4}{c}{AUPR} & \multicolumn{1}{r}{} \\ 
\end{tabular}
}
\caption{Spearman correlation between performance scores (white) and associated $p$-values (grey); DREAM5 network inference challenge data. [Performance scores include area under the receiver operating characteristic (AUROC) and precision recall (AUPR) curves, based on both local scores and the proposed Multi-Scale Scores (MSS). The Overall Score of \cite{Marbach2012} combines both local AUROC and AUPR scores as described in Supp. Sec.~\ref{overall def}.]}
\label{corr table}
\end{table}

\subsubsection{Individual Participant Performance} \label{indiv}

Network inference algorithms are predominantly employed in contexts where there is considerable uncertainty regarding the data-generating network topology. In such settings attention is often restricted to the ``most significant'' aspects of inferred topology; we therefore similarly restrict attention to individual estimator performance as quantified by scores based on AUPR, reserving AUROC results for Supp. Sec.~\ref{AUROC results}.

Under MSS1 the best performing methods included {\it Bayesian 1,2,6} and {\it Meta 1,5} (Fig.~\ref{DREAM}).
For MSS2 many methods attained similarly high scores; these included {\it Bayesian 1,2,6} and {\it Meta 1,5}. 
Interestingly these 5 best performing algorithms did not rank highly according to the (local) Overall Score of \cite{Marbach} (placing 8th, 24th, 20th, 22nd and 34th respectively out of 36).
Similarly {\it Regression 2}, which almost maximised $S_{\text{local}}^{\text{PR}}$ on the {\it in silico} dataset, was among algorithms with the lowest $S_{\text{MSS2}}^{\text{PR}}$.
This appears to be a real example of the conclusion of Theorem \ref{different}, where {\it Regression 2} recovers several individual edges yet fails faithfully reconstruct higher-order topology, including connected components.
Supp. Fig.~\ref{meta5} displays inferred and benchmark topology for {\it Regression 2}; it is visually clear that the algorithm fails to distinguish between different connected components in the true data-generating network. 
Conversely, estimators such as {\it Bayesian 1,2} simultaneously achieve strong multi-scale performance and unimpressive local performance (Fig.~\ref{DREAM}). 
It is well understood that Bayesian estimators are well suited to recovery of a coherent joint graphical model, compared to estimators that decompose network inference into independent neighbourhood selection problems \citep{Marbach}.
These results suggest that this intuition is manifest in the multi-scale performance of Bayesian network inference algorithms as quantified by both $S_{\text{MSS1}}^{\text{PR}}$ and $S_{\text{MSS2}}^{\text{PR}}$.

\subsubsection{Community Performance} \label{community}

A key finding of \cite{Marbach2012} was that the {\it Community} network, a principled aggregation of the DREAM5 participants' predictions, was able to maximise the Overall Score for local performance (the so-called ``wisdom of crowds'' phenomenon).
Interestingly, we found that this {\it Community} network also performed well over multiple length scales (Supp. Figs. \ref{silico bar},\ref{ecoli bar},\ref{sc bar}). 
In fact the {\it Community} network was almost $S_{\text{MSS2}}^{\text{ROC}}$ and $S_{\text{MSS2}}^{\text{PR}}$-optimal over all 36 methodologies on all 3 datasets and was ranked highly under $S_{\text{MSS1}}^{\text{ROC}}$ in all experiments.
These results suggest that aggregation of estimators that perform well locally (e.g. {\it Regression 2}) and well on higher-order topology (e.g. {\it Bayesian 2}) may be a successful strategy to achieve strong performance over multiple length scales.

\subsubsection{Significance Testing} \label{signif}

A common fault of complex test statistics is that they lack power relative to simpler statistics (e.g. \cite{Simon}).
In order to explore the relative statistical power of existing and proposed performance scores $S$ in hypothesis testing, we computed Monte Carlo $p$-values for each network estimate $\hat{G}$ in the DREAM5 challenge, taking each score in turn as a test function (see Sec.~\ref{test}).
Under the assumption that all DREAM5 participants were capable, in principle, of recovering some aspects of structural information (i.e. estimators were not simply random samples from $\mathcal{M}_0$), we may estimate statistical power by treating the DREAM5 data as cases where $H_0$ should be rejected.
Results (Fig.~\ref{pvals}) showed that, over all 36 participants and all 3 datasets, $S_{\text{local}}^{PR}$ assigned a significance level $p < 0.05$ to 88\% of the networks, $S_{\text{MSS1}}^{\text{PR}}$ achieved 73\% whilst $S_{\text{MSS2}}^{\text{PR}}$ achieved 71\% at this threshold.
Varying the threshold $1 > p > 0.01$ we see that $S_{\text{local}}^{\text{PR}}$ is roughly 25\% more powerful compared to both $S_{\text{MSS1}}^{\text{PR}}$ and $S_{\text{MSS2}}^{\text{PR}}$.
Given that challenge participants likely developed methodology to optimise local performance scores rather than multi-scale scores, these findings support the conclusion that the multi-scale scores $S_{\text{MSS1}}^{\text{PR}}$, $S_{\text{MSS2}}^{\text{PR}}$ offer comparable power relative to $S_{\text{local}}^{\text{PR}}$.
From the perspective of methodology, these results show that most DREAM5 algorithms are better suited to identifying local rather than higher-order topology.
Results for ROC-based scores demonstrated a more considerable decrease in statistical power between local and multi-scale test statistics; it may therefore be prudent to restrict attention to multi-scale results based on PR-scores.

\begin{figure}[t]
\centering
\includegraphics[width = \textwidth]{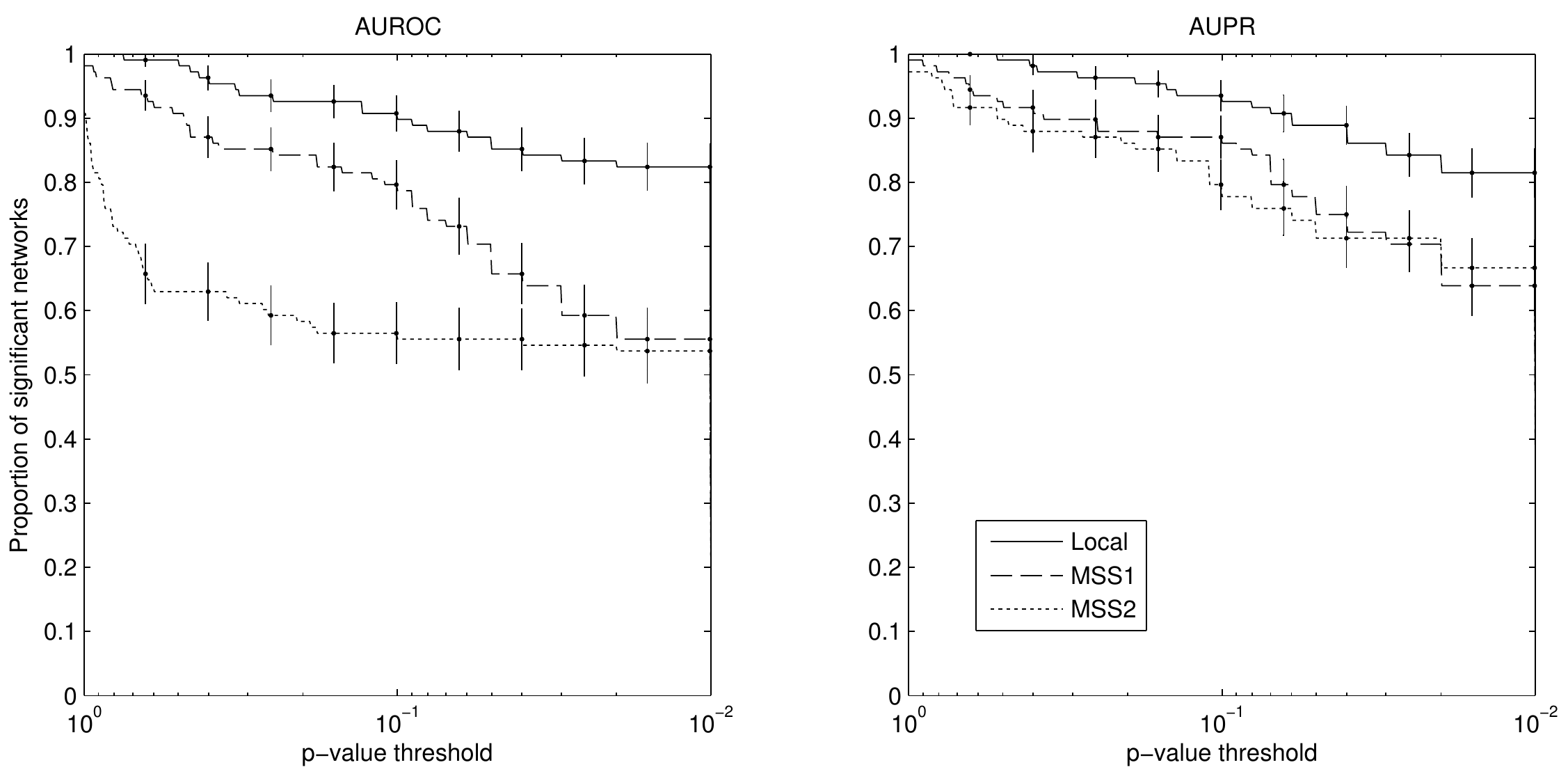}
\caption{Significance testing; empirical study of statistical power.
[Here we considered all 36 DREAM5 methodologies and all 3 datasets, computing Monte Carlo $p$-values for each of the area under the receiver operating characteristic (AUROC) and precision-recall (AUPR) curves, as defined by both local and multi-scale scores. Error bars represent the 68\% confidence region, computed via bootstrap resampling.]}
\label{pvals}
\end{figure}

\section{Discussion} \label{discussion}

Network inference algorithms are increasingly used to facilitate diverse scientific goals, including prediction of single interactions with high precision, identifying motifs such as cliques or clusters, and uncovering global topology such as connected components.
Yet performance assessment of these algorithms does not currently distinguish between these contrasting goals; we do not know which algorithms are most suited to which tasks.
Widely used local scores based on classifier analysis capture the ability of an algorithm to exactly recover a benchmark network at the local level.
In this paper we proposed novel multi-scale scores (MSSs) that instead capture ability to infer connectivity patterns simultaneously over all length scales.
Unlike previous multi-scale proposals, our MSS satisfy desiderata that have contributed to the popularity of local performance scores, including invariance to rank-preserving transformation.
Theoretical results demonstrated the potential for local and multi-scale analyses to draw arbitrarily different conclusions.
Empirical results, based on a comprehensive analysis of the DREAM5 challenge data, confirmed that whilst local and multi-scale performance were correlated, in several cases an estimator recovered local topology well but recovered higher-order topology poorly and {\it vice versa}.
These results highlight the importance of performance assessment on multiple length scales, since apparently promising methodologies may be highly unsuitable for inference of non-local network topology.
Notably the {\it Community} network proposed by \cite{Marbach2012} enjoyed both good local and good multi-scale performance.
Finally we demonstrated that one proposed score ($S_{\text{MSS2}}^{\text{PR}}$) possesses comparable statistical power with respect to local scores.

Our proposals extend scores based on individual edges by considering paths of length $>1$.
Multi-scale analysis of networks is now an established research field; indeed techniques based on \cite{Page} have recently come to light in both gene regulatory network \citep{Morrison,Winter} and protein signalling network \citep{Johannes,Feiglin} analyses.
This work differs to these within-network studies by focussing on the challenge of performance assessment.
The statistical comparison of two unweighted networks has recently been considered by \cite{Stadler,Yates}.
In related work, \cite{Milenkovic} constructed an {\it ad-hoc} null distribution over residue interaction graphs based on degree-distribution, clustering coefficient, clustering spectrum, average network diameter and spectrum of shortest path lengths. However these are {\it mean field} statistics, which do not respect vertex labels, making their use inappropriate for many scientific applications.

Existing and proposed scores capture complementary aspects of performance, so that it is misleading to speak of a universally ``best'' methodology.
MSS was specifically designed to capture aspects of higher-order network reconstruction that have been relatively neglected by the community, however in general the existing and proposed scores together still fail to capture many important aspects of estimator performance. In that respect, this paper represents a small step towards more comprehensive characterisation of estimator performance.
MSS1 required a factor of $\mathcal{O}(p)$ more computation than existing local scores, with MSS2 requiring a factor of $\mathcal{O}(p^2)$. 
(Computational times are presented in Supp. Fig.~\ref{times}.)
Given that in practice it may be desirable to restrict attention to a single measure of multi-scale performance, and given that $S_{\text{MSS1}}^{\text{PR}}$ and $S_{\text{MSS2}}^{\text{PR}}$ provide similar statistical power, it may be preferable to favour MSS1 over MSS2.

Whilst multi-scale analysis has the potential to complement local analyses and deepen our understanding of the applicability and limitations of network inference algorithms, there are two commonly encountered settings where multi-scale analyses may actually be preferred over local analyses:
\begin{enumerate}
\item {\bf Time series data:} \cite{Iwasaki,Dash} and others have noticed that, in the setting of causal graphical models and time series data, the ``true'' causal graph $G$ depends crucially on the time scale at which the process is described.
To see this, notice that the simple multivariate process defined by the causal graph $G$, equal to $\dots \rightarrow X_i \rightarrow X_{i+1} \rightarrow \dots$ and distributions  $X_{i+1}(t+1) = X_i(t) + \epsilon_{i+1}(t+1)$ with $\epsilon_i(t)$ independent $\mathcal{N}(0,\sigma^2)$ random variables, satisfies $X_{i+2}(t+2) = X_i(t) + \tilde{\epsilon}_{i+2}(t+2)$ with $\tilde{\epsilon}$ independent $\mathcal{N}(0,\tilde{\sigma}^2)$ random variables. Thus any consistent estimator of a linear Gaussian vector autoregressive process will, based on data from even time points only, infer the causal graph $\hat{G} = G^2$ equal to $\dots \rightarrow X_i \rightarrow X_{i+2} \rightarrow \dots$. i.e. the apparently natural estimator demonstrates the worst possible performance under $S_{\text{local}}^{\text{ROC/PR}}$.
This is undesirable since it remains the case that all descendancy relationships in $\hat{G}$ are all contained in $G$.
However $\hat{G}$ is readily seen to be $S_{\text{MSS1/MSS2}}^{\text{PR}}$-optimal, so that the proposed scores are robust to the problem of inappropriate sampling frequency.
Real-world examples of this problem arise in the (frequently encountered) settings where no natural time scale is available for experimental design, or the sampling frequency is limited by resource constraints.

\item {\bf Interventional experiments:} In experimental settings, network topology is frequently inferred through targeted interventions (e.g. knock-downs, knock-outs, small molecule inhibitors, monoclonal antibodies etc.). Differential analysis with respect to a control sample in this case does not reveal the direct children of the interventional target, but rather specifies the descendants of that target.
Thus benchmark information is naturally available on descendants, whereas additional experimental work is required to uncover direct edges.
In these settings performance assessment using descendancy (i.e. MSS1) rather than using individual edges facilitates a reduction in experimental cost.
\end{enumerate}

This paper focused on the problem of inference for network structure, but an alternative viewpoint is to assess networks by their predictive power \citep{Prill2}.
In this setting (A4) is replaced by an assumption that estimator weights correspond to effect size (and may be signed). For example these weights may be  average causal effects \citep[ACE;][]{Pearl,Maathuis2}. 
However this setting is less common as many popular network inference methodologies do not entail a predictive statistical model. Moreover, prediction is often computationally intensive, whereas assessment of the inferred topology is relatively cheap.

\section{Conclusion} \label{conclude}

This paper lays statistical and conceptual foundations for the analysis of network inference algorithms on multiple scales.
We restricted attention to the important problem of comparing between inference procedures, as in the DREAM challenges, and proposed novel multi-scale scores (MSSs) for this setting.
An empirical study based on the DREAM5 data demonstrated that multi-scale analysis provides additional insight into the character and capability of network inference algorithms and suggested that a crowd-source approach to inference may offer improved reconstruction of higher-order topology.
In this paper we focussed on connectivity, but multi-scale scores may be designed to capture ability to infer specific motifs such as cliques, or particular feedback circuits, for example.
This work is exploratory and should not be used to form conclusions regarding the performance of specific teams in the DREAM5 challenge.

\section*{Acknowledgement}
The authors are grateful to Gustavo Stolovitzky and Sach Mukherjee for feedback on an earlier draft of this manuscript.
This work was supported by the Centre for Research in Statistical Methodology grant (EPSRC EP/D002060/1). 
CJO was involved in designing the DREAM8 Challenge.
CJO, RA, SEFS were not involved in the DREAM5 Challenge.

\appendix

\section{Proofs}

\begin{proof}[Proof of Theorem \ref{nonunique}]
$S$-optimality of the benchmark network $G$ follows immediately from the definition of each score. 
To prove nonuniqueness we begin by considering MSS1. 
For this, take $G$ to be the network $A \rightarrow B \rightarrow C \rightarrow A$ and take $G'$ to be the network $A \leftarrow B \leftarrow C \leftarrow A$; then both networks entail the same descendancy relationships.
It follows from (D3; consistency) that both $G$ and $G'$ maximise $S_{\text{MSS1}}^{\text{ROC}}$ and $S_{\text{MSS1}}^{\text{PR}}$, demonstrating that the maximiser will be non-unique in general.
The argument for MSS2 is analogous, using the same pair of networks.
\end{proof}

\begin{proof}[Proof of Theorem \ref{different}]
(a) Given $\epsilon > 0$.
We proceed by constructing a sequence of pairs $(G,\hat{G})$ indexed by $p$, the number of vertices, such that the scores associated with $\hat{G}$ end simultaneously to the required limits as $p \rightarrow \infty$.
Define a data-generating network $G$ on $p$ vertices by the edge set 
\begin{eqnarray}
E(G) = \{(1,i) \; : \; 2 \leq i \leq p-1\} \cup \{(i,p) \; : \; 2 \leq i \leq p-1\}
\end{eqnarray}
and consider an unweighted network $\hat{G}$ with topology $E(\hat{G}) = E(G) \cup \{(p,1)\}$. i.e. $\hat{G}$ differs to $G$ in the addition of a single edge $(p,1)$.
The ROC curve corresponding to $\hat{G}$ is defined by the three points 
\begin{eqnarray}
\{(0,0),(\text{FPR},\text{TPR}),(1,1)\}
\end{eqnarray}
so that by linear interpolation the area under the curve is $\frac{1}{2}(1 + \text{TPR} - \text{FPR})$.
The PR curve is similarly defined by 
\begin{eqnarray}
\left\{(0,0),(\text{TPR},\text{PPV}),\left(1,\frac{T}{p(p-1)}\right)\right\}.
\end{eqnarray}
Here $T$ is the number of true examples; for local scores this is $E(G)$, for MSS1 this is $\# \{(i,j) : G(i \rightarrow j) \neq \emptyset, \; i \neq j\}$ and for MSS2 this is $\sum_{i,j} e_{i,j}$.
To compute the area under this curve we must use nonlinear interpolation \citep{Goadrich}.
Specifically, interpolation between the two points $A$, $B$ with true/false positive values $(\text{TP}_A,\text{FP}_A)$, $(\text{TP}_B,\text{FP}_B)$ respectively we create new points for each of $\text{TP}_A + 1, \text{TP}_A + 2, \dots , \text{TP}_B-1$, increasing the false positives for each new point by $(\text{FP}_B - \text{FP}_A)/(\text{TP}_B-\text{TP}_A)$.
By direct calculation we have that the area under this interpolated curve is
\begin{eqnarray}
\frac{\text{TP}^2}{T(\text{TP} + \text{FP})} + \frac{1}{T} \sum_{j=1}^{T - \text{TP}} \frac{(\text{TP}+j)(T - \text{TP})}{(\text{TP} + \text{FP})(T - \text{TP}) + j[p(p-1) = \text{TP} - \text{FP}]}.
\end{eqnarray}

Fix $\epsilon > 0$. 
For (i), since $\hat{G}$ differs to $G$ by just a single edge we have TP $=2(p-2)$, FP $=1$, $T = 2(p-2)$, TPR $= 1$, FPR $= \frac{1}{(p-1)(p-2)}$, PPV $= \frac{2p-4}{2p-3}$. 
It is then easily checked that 
\begin{eqnarray}
S_{\text{local}}^{\text{ROC}} = \frac{1}{2}\left(1 + 1 - \frac{1}{(p-1)(p-2)}\right) \rightarrow 1 \; \; \; \text{as} \; \; \; p \rightarrow \infty, \\
S_{\text{local}}^{\text{PR}} = \frac{[2(p-2)]^2}{[2(p-2)][2(p-2)-1]} \rightarrow 1 \; \; \; \text{as} \; \; \; p \rightarrow \infty.
\end{eqnarray}
Thus $\exists P_1 \in \mathbb{N} : \forall p \geq P_1, S_{\text{local}}^{\text{ROC}} > 1-\epsilon, S_{\text{local}}^{\text{PR}} > 1-\epsilon$.

For (ii-iii) and MSS1 note that $\hat{G}(i \rightarrow j) \neq \emptyset$ for all $i \neq j$, so that $\hat{G}$ predicts every possible descendancy. Thus TP $= 3(p-2)$, FP $=p^2-4p+1$, $T = 3(p-2)$, TPR $= 1$, FPR $= 1$, PPV $= \frac{3(p-2)}{p(p-1)}$ and 
\begin{eqnarray}
S_{\text{MSS1}}^{\text{ROC}} = \frac{1}{2}\left(1 + 1 - 1\right) \rightarrow \frac{1}{2} \; \; \; \text{as} \; \; \; p \rightarrow \infty, \\
S_{\text{MSS1}}^{\text{PR}} = \frac{[3(p-2)]^2}{[3(p-2)][3(p-2)+(p^2-4p+1)]} \rightarrow 0 \; \; \; \text{as} \; \; \; p \rightarrow \infty.
\end{eqnarray}
Thus $\exists P_2 \in \mathbb{N} : \forall p \geq P_2, S_{\text{MSS1}}^{\text{ROC}} < \frac{1}{2}+\epsilon, S_{\text{MSS1}}^{\text{PR}} <\epsilon$.
For MSS2 note that the estimated effects are $\hat{\bm{e}} = \bm{1}_{p \times p}$ whereas the benchmark effects are 
\begin{eqnarray}
\bm{e} = \left[ \begin{array}{ccccc} 1 & 1 & \dots & 1 & 1 \\ & 1 & & & (p-2)^{-1} \\ & & \ddots & & \vdots \\ & & & & (p-2)^{-1} \\ & & & & 1 \end{array} \right]
\end{eqnarray}
leading to TP $= 2p$, FP $= p(p-2)$, $T = 2p$, TPR $= 1$, FPR $= 1$, PPV $= \frac{2}{p}$ and 
\begin{eqnarray}
S_{\text{MSS2}}^{\text{ROC}} = \frac{1}{2}\left(1 + 1 - 1\right) \rightarrow \frac{1}{2} \; \; \; \text{as} \; \; \; p \rightarrow \infty, \\
S_{\text{MSS2}}^{\text{PR}} = \frac{[2p]^2}{[2p][2p+p(p-2)]} \rightarrow 0 \; \; \; \text{as} \; \; \; p \rightarrow \infty.
\end{eqnarray}
Thus $\exists P_3 \in \mathbb{N} : \forall p \geq P_3, S_{\text{MSS2}}^{\text{ROC}} < \frac{1}{2}+\epsilon, S_{\text{MSS2}}^{\text{PR}} <\epsilon$.

Taking $p > \max\{P_1,P_2,P_3\}$, we have shown that $(G,\hat{G})$ satisfy the conclusions of the theorem.

(b) The converse result is proved similarly, taking $G$ to be the cycle $1 \rightarrow 2 \rightarrow \dots \rightarrow p \rightarrow 1$ and $\hat{G}$ to be the reverse  $1 \leftarrow 2 \leftarrow \dots \leftarrow p \leftarrow 1$.
\end{proof}

\begin{proof}[Proof of Theorem \ref{Warshall}]
Let $P=(p_0,\dots,p_m)$ be a path from $p_0$ to $p_m$. Define the intermediate vertices in the path to be the vertices $p_1,\dots,p_{m-1}$. For each pair of vertices $i$ and $j$ we wish to calculate $\tau_{i,j}$, the largest threshold $\tau$ for which $i$ and $j$ are connected by a path in $H^{\tau}$. 

For $k=0,\dots,p$, let $\tau_{i,j}^{(k)}$ represent the largest value of the threshold $\tau$ for which $H^{\tau}$ contains a path from $i$ to $j$ with intermediate vertices in the set $\{1,\dots,k\}$. From this definition, we have $\tau_{i,j}^{(0)}=H$, the weighted adjacency matrix and  $\tau_{i,j}=\tau_{i,j}^{(p)}$. Next we show that for $k=1,\dots,p$, 
\begin{eqnarray}
\tau_{i,j}^{(k)}=\max\{\tau_{i,j}^{(k-1)},\min\{\tau_{i,k}^{(k-1)},\tau_{k,j}^{(k-1)}\}\}. 
\label{recurse}
\end{eqnarray}
To see this, note that $\tau_{i,j}^{(k)}=\tau_{i,j}^{(k-1)}$ unless there is a more strongly connected path which goes through $k$. Such a path must first connect $i$ to $k$, and then connect $k$ to $j$, with both paths involving intermediate vertices in $\{1,\dots,k-1\}$. This connection is broken in $H^{\tau}$ once the threshold $\tau$ is raised above the minimum of these two connections, that is when $\tau > \min\{\tau_{i,k}^{(k-1)},\tau_{k,j}^{(k-1)}\}$.

The conclusion of the Theorem follows by iterative application of Eqn.~\ref{recurse}.
\end{proof}

\noindent {\it Remarks on the proof of Theorem \ref{Warshall}:} To make the algorithm more memory efficient, we note that it is not necessary to store more than $\mathcal{O}(p^2)$ values for $\tau$ at any point. This stems from the fact that each iteration is calculated only from the results of the previous one, reducing the necessary storage to $2p^2$. This can be improved further by noting that if $i=k$ or $j=k$ then $k$ cannot be an intermediate vertex in a path connecting $i$ to $j$ and hence $\tau_{i,j}^{(k)}=\tau_{i,j}^{(k-1)}$. These $2k-1$ index values are the only elements needed to calculate the $k$th iteration, and they themselves do not need to be updated. Hence, the remaining elements in the matrix are needed only for their own update and can be safely over-written. A corollary to this is that the $i$ and $j$ loops in the algorithm can be completed in any order or even in parallel.
Note that this algorithm works whether or not the weight matrix $H$ is allowed to contain loops. Irrespective of whether loops are permitted or not, $\tau_{i,i}$ represents the largest value of the edge threshold for which node $i$ is contained in a cycle.

For Theorem \ref{compute} we require a series of Lemmas:

\begin{lem}
\label{converge}
The effects 
\begin{eqnarray}
e_{ij} = \delta_{ij} + \sum_{P\in\mathcal{P}(i,j)} \prod_{k=1}^{\ell(P)} \frac{1}{|H(\bullet,p_k)|}
\label{effects}
\end{eqnarray}
are well-defined and satisfy $0 \leq e_{ij} \leq 1$.
\end{lem}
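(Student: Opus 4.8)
The plan is to bound the path-sum in Eqn.~\ref{effects} by comparing it with the total mass of a random walk on $H$ that moves \emph{backwards} along edges. First I would fix the target vertex $j$ and consider the stochastic process on $V$ that starts at $j$ and, at each step, moves from the current vertex $v$ to a uniformly chosen parent in $H(\bullet,v)$, provided $H(\bullet,v)\neq\emptyset$; if $v$ has no parents the walk is absorbed. The key observation is that for a path $P=(p_0,\dots,p_m)$ with $p_0=i$, $p_m=j$, the product $\prod_{k=1}^{\ell(P)} 1/|H(\bullet,p_k)|$ is precisely the probability that the first $m$ steps of this backward walk traverse the reversed vertex sequence $(p_m,p_{m-1},\dots,p_0)$. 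Summing over \emph{all} paths from $i$ to $j$ therefore sums the probabilities of a collection of \emph{disjoint} cylinder events in the path space of the walk (disjoint because two distinct paths that both end at $j$ and start at $i$ must differ at some first step, giving incompatible walk histories once we also record the step at which the walk first reaches $i$ — here the definition's constraint $p_k\neq i$ for $k\geq 1$, i.e.\ the path visits $i$ only at its start, is exactly what makes ``the walk reaches $i$ for the first time at step $m$'' a well-defined disjoint decomposition).

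Given that, the finite partial sums $\sum_{P\in\mathcal{P}(i,j),\,\ell(P)\leq N} \prod_{k=1}^{\ell(P)} 1/|H(\bullet,p_k)|$ are nondecreasing in $N$ and bounded above by $1$, being a sum of probabilities of disjoint events; hence the series converges and its limit lies in $[0,1]$. Adding the Kronecker term $\delta_{ij}$ contributes an extra $1$ only in the case $i=j$, but in that case $\mathcal{P}(i,i)=\emptyset$ (since $G(i\rightarrow i)=\emptyset$ by the convention in Sec.~\ref{notation}), so the path-sum is empty and $e_{ii}=1\in[0,1]$. For $i\neq j$ we have $\delta_{ij}=0$ and $e_{ij}$ equals the convergent path-sum, which is in $[0,1]$. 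This establishes both well-definedness (the series converges absolutely, all terms being nonnegative) and the bound $0\leq e_{ij}\leq 1$.

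I would expect the main obstacle to be making the disjointness claim fully rigorous: one must set up the sample space of the backward walk carefully (infinite sequences of vertices with an absorption convention), define for each path $P$ from $i$ to $j$ the event ``the walk's first $m=\ell(P)$ steps read off $P$ in reverse, \emph{and} the walk has not visited $i$ before step $m$'', and check that distinct $P,P'$ give disjoint events and that the measure of each event is the stated product. The constraint that intermediate vertices of $P$ may repeat (paths may contain cycles not through $p_0$) is harmless here — it just means several cylinder events of different lengths can share a prefix — but the constraint $p_k\neq i$ is essential and should be invoked explicitly. An alternative, purely algebraic route avoids probability entirely: let $a_{ij}^{(m)}$ denote the sum of $\prod_{k=1}^{m} 1/|H(\bullet,p_k)|$ over length-$m$ paths from $i$ to $j$ with no intermediate vertex equal to $i$; show by induction on $m$, using the row-substochasticity of the backward transition matrix, that $\sum_j a_{ij}^{(m)} \leq 1$ for every $m$ and every $i$, and that these quantities have disjoint ``first return to $i$'' structure so that $\sum_{m\geq 1}\sum_j a_{ij}^{(m)}\leq 1$ as well; either way the monotone-bounded-series conclusion follows. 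I would present the probabilistic version as the cleaner exposition and relegate the inductive bookkeeping to a short displayed computation.
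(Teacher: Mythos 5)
Your proof is correct, and it takes a genuinely different route from the paper's. The paper argues by induction on the truncation length $n$ of the partial sums $e_{ij}^{(n)}$: it decomposes each path from $i$ to $j$ by its final edge to obtain $e_{ij}^{(n)} = \sum_{k \in H(\bullet,j)} \tfrac{1}{|H(\bullet,j)|} e_{ik}^{(n-1)}$ (with the case $i \in H(\bullet,j)$ treated separately so as not to revisit $i$), and concludes $e_{ij}^{(n)} \leq 1$ from the inductive hypothesis because the backward transition weights sum to one; your sketched ``alternative algebraic route'' is essentially this argument. Your primary, probabilistic argument instead exhibits each path's weight as the probability of a cylinder event for the uniform-parent backward walk started at $j$, and the crux — that distinct paths yield \emph{disjoint} cylinders — is correctly pinned on the constraint $p_k \neq i$ for $k \geq 1$: if the reversal of a shorter path were a prefix of the reversal of a longer one, the longer path would have to revisit $i$ at an intermediate position, which is forbidden. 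This buys more than the bound: it identifies $e_{ij}$ (for $i \neq j$) as the probability that the backward walk from $j$ ever reaches $i$, from which the recursion in Eqn.~\ref{recurse} (the content of Lemma~\ref{equivalent}) drops out by first-step analysis, whereas the paper must re-run the last-edge decomposition with care about rearranging the infinite sums. The paper's induction, in exchange, is entirely elementary and its partial-sum recursion is reused verbatim in the subsequent lemma. Your handling of the $\delta_{ij}$ term via $\mathcal{P}(i,i) = \emptyset$ matches the paper's conventions, and the measure-theoretic bookkeeping you flag (cylinder sets with an absorption state) is routine, so I see no gap.
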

\begin{proof}
We show the sum in Eqn.~\ref{effects} converges absolutely.
Since all terms are non-negative, it is sufficient to prove that, for each $i \neq j$ such that $\mathcal{P}(i,j) \neq \emptyset$, the sum is bounded above.
Consider the partial sums
\begin{eqnarray}
e_{ij}^{(n)} = \sum_{P \in \mathcal{P}(i,j) : \ell(P) \leq n} \prod_{k=1}^{\ell(P)} \frac{1}{|H(\bullet,p_k)|}.
\end{eqnarray}
If Eqn.~\ref{effects} diverges, so must $\lim_{n \rightarrow \infty} e_{ij}^{(n)}$ since all paths are of finite length, so it is sufficient to prove $e_{ij}^{(n)}$ is bounded above by one as $n \rightarrow \infty$.
We proceed inductively, with base case 
\begin{eqnarray}
e_{ij}^{(1)} = \left\{
     \begin{array}{lr}
       \frac{1}{|H(\bullet,j)|} & : i \in H(\bullet,j) \\
       0 & : \text{otherwise}
     \end{array}
   \right\}  \leq 1.
\end{eqnarray}
Suppose $e_{ij}^{(n-1)} \leq 1$. Then for $i \notin H(\bullet,j)$ we have
\begin{eqnarray}
e_{ij}^{(n)} = \sum_{k \in H(\bullet,j)} \frac{1}{|H(\bullet,j)|} e_{ik}^{(n-1)}
\end{eqnarray}
since a path $P$ of length $m$ can be decomposed into a path from $p_0 = i$ to $p_{m-1} = k$ and an edge from $p_{m-1} = k$ to $p_m = j$.
Therefore $e_{ij}^{(n)} \leq \sum_{k \in H(\bullet,j)} \frac{1}{|H(\bullet,j)|} = 1$ as required.
The case where $i \in H(\bullet,j)$ is similar, leading to
\begin{eqnarray}
e_{ij}^{(n)} = \sum_{k \in H(\bullet,j) \setminus \{i\}} \frac{1}{|H(\bullet,j)|} e_{ik}^{(n-1)}
\end{eqnarray}
from which we again conclude $e_{ij}^{(n)} \leq \sum_{k \in H(\bullet,j)} \frac{1}{|H(\bullet,j)|} = 1$.
Therefore $e_{ij}^{(n)} \leq 1$ for all $n$ by induction.
\end{proof}

\begin{lem}
\label{equivalent}
Effects, as defined in Eqn.~\ref{effects}, satisfy (i) $e_{ij} = \delta_{ij}$ for all $i,j$ s.t. $\mathcal{P}(i,j) = \emptyset$, (ii) $e_{ij} = \sum_{k \in H(\bullet,j)} \frac{1}{|H(\bullet,j)|} e_{ik}$ for all $i,j$ s.t. $\mathcal{P}(i,j) \neq \emptyset$.
\end{lem}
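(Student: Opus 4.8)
The plan is to obtain part (i) immediately from the definition and part (ii) by decomposing each path according to its final edge, using Lemma~\ref{converge} to supply the finiteness that legitimises rearranging the (countable, non-negative) sums involved. For (i): if $\mathcal{P}(i,j) = \emptyset$ then the path sum in Eqn.~\ref{effects} is empty, so $e_{ij} = \delta_{ij}$. In particular $\mathcal{P}(i,i) = \emptyset$ for every $i$ (paths may not return to their start vertex), so $e_{ii} = 1$; this identity is exactly what will absorb the length-one paths in part (ii).

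For (ii), fix $i,j$ with $\mathcal{P}(i,j) \neq \emptyset$; this forces $i \neq j$, hence $\delta_{ij} = 0$ and $e_{ij} = \sum_{P \in \mathcal{P}(i,j)} \prod_{l=1}^{\ell(P)} 1/|H(\bullet,p_l)|$. I would set up the correspondence sending a path $P = (p_0,\dots,p_m) \in \mathcal{P}(i,j)$ (so $m = \ell(P) \geq 1$, $p_0 = i$, $p_m = j$) to the pair consisting of the parent $k := p_{m-1} \in H(\bullet,j)$ together with the truncated walk $P' := (p_0,\dots,p_{m-1})$: when $m \geq 2$ this $P'$ is a genuine path in $\mathcal{P}(i,k)$, and when $m = 1$ it is the trivial walk $(i)$ with $k = i$. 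Conversely, appending the edge $(k,j)$ to any $P' \in \mathcal{P}(i,k)$ with $k \in H(\bullet,j)$ returns an element of $\mathcal{P}(i,j)$: the new edge exists because $k \in H(\bullet,j)$, the constraint ``no interior vertex equals $i$'' is inherited from $P'$, and the terminal vertex $j$ differs from $i$. Along this correspondence the weight factorises,
\begin{eqnarray}
\prod_{l=1}^{m} \frac{1}{|H(\bullet,p_l)|} & = & \frac{1}{|H(\bullet,j)|} \prod_{l=1}^{m-1} \frac{1}{|H(\bullet,p_l)|},
\end{eqnarray}
and the trailing product is precisely the weight attached to $P'$ (read as the empty product $1$ when $m = 1$). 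Grouping the path sum by the value of $k$ and recognising the inner bracket as $e_{ik}$ then yields $e_{ij} = \sum_{k \in H(\bullet,j)} \frac{1}{|H(\bullet,j)|} e_{ik}$, the $\delta$ term in the definition of $e_{ik}$ accounting for the length-one path $(i,j)$ exactly when $i \in H(\bullet,j)$.

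The one delicate point, and the step I would take most care over, is the bookkeeping of the degenerate case $k = i$: because the paper's paths have length at least one, the truncation of the direct edge $(i,j)$ is not itself a path, and it is the Kronecker delta in Eqn.~\ref{effects} (equivalently $e_{ii} = 1$ from part (i)) that makes the grouping close up with no missing or doubled contributions. Everything else is the routine regrouping of a convergent series of non-negative terms, justified by Lemma~\ref{converge}.
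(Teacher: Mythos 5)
Your proof is correct and follows essentially the same route as the paper: decompose each path by its final edge (equivalently, group by the penultimate vertex $k \in H(\bullet,j)$), use the absolute convergence supplied by Lemma~\ref{converge} to justify the regrouping, and let the Kronecker delta in the definition of $e_{ik}$ absorb the length-one path when $i \in H(\bullet,j)$. The only cosmetic difference is that the paper splits into the two cases $i \notin H(\bullet,j)$ and $i \in H(\bullet,j)$, whereas you handle both uniformly via $e_{ii} = 1$; the underlying bookkeeping is identical.
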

\begin{proof}[Proof of Lemma \ref{equivalent}]
The first statement follows immediately from Eqn.~\ref{effects}. For the second statement, consider two cases: Firstly, for $i \neq j$ such that $i \notin H(\bullet,j)$ we have 
\begin{eqnarray}
\sum_{k \in H(\bullet,j)} \frac{1}{|H(\bullet,j)|} e_{ik} & = & \sum_{k \in H(\bullet,j)} \frac{1}{|H(\bullet,j)|} \sum_{P\in\mathcal{P}(i,k)} \prod_{l=1}^{\ell(P)} \frac{1}{|H(\bullet,p_l)|}  \\
& = & \sum_{P \in \mathcal{P}(i,j)} \prod_{l=1}^{\ell (P)} \frac{1}{|H(\bullet,p_l)|} = e_{ij}.
\end{eqnarray}
Note that absolute convergence (Lemma \ref{converge}) ensures the manipulation of infinite sums is valid.
The proof for $i \in H(\bullet,j)$ is similar but requires slightly more care:
\begin{eqnarray}
\sum_{k \in H(\bullet,j)} \frac{1}{|H(\bullet,j)|} e_{ik} & = & \sum_{k \in H(\bullet,j)} \frac{1}{|H(\bullet,j)|} \left\{ \delta_{ik} + \sum_{P\in\mathcal{P}(i,k)} \prod_{l=1}^{\ell(P)} \frac{1}{|H(\bullet,p_l)|} \right\} \\
& = & \frac{1}{|H(\bullet,j)|} + \sum_{k \neq i} \frac{1}{|H(\bullet,j)|} \sum_{P \in \mathcal{P}(i,k)} \prod_{l=1}^{\ell(P)} \frac{1}{|H(\bullet,p_l)|}  \\
& = & \sum_{\substack{P \in \mathcal{P}(i,j) \\ p_{\ell(P) - 1} = i}} \prod_{l=1}^{\ell (P)} \frac{1}{|H(\bullet,p_l)|} + \sum_{\substack{P \in \mathcal{P}(i,j) \\ p_{\ell(P) - 1} \neq i}} \prod_{l=1}^{\ell (P)} \frac{1}{|H(\bullet,p_l)|}  \\
& = & \sum_{P \in \mathcal{P}(i,j)} \prod_{l=1}^{\ell (P)} \frac{1}{|H(\bullet,p_l)|} = e_{ij}.
\end{eqnarray}
\end{proof}

\begin{lem}
\label{character}
Define a matrix $\bm{M}^{(i)}$ by (i) $M_{im}^{(i)} = \delta_{im}$, (ii) for $\mathcal{P}(i,k) \neq \emptyset$ and $m \in H(\bullet,k)$, set $M_{km}^{(i)} = \frac{1}{|H(\bullet,k)|}$, (iii) $M_{km}^{(i)} = 0$ otherwise.
Then the conclusions of Lemma \ref{equivalent} are equivalent to the following:
For each $i$, the vector $\bm{v}$ where $v_k = e_{ik}$ satisfies $\bm{M}^{(i)}\bm{v} = \bm{v}$.
\end{lem}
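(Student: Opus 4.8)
The plan is to show that the two conditions of Lemma~\ref{equivalent} are, componentwise, exactly the statement $(\bm{M}^{(i)}\bm{v})_k = v_k$ for the vector $\bm{v}$ with $v_k = e_{ik}$. I would fix $i$ throughout and read off $(\bm{M}^{(i)}\bm{v})_k = \sum_m M_{km}^{(i)} v_m$ for each $k$, splitting into the three cases that define $\bm{M}^{(i)}$.

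First, take $k = i$. By part (i) of the definition $M_{im}^{(i)} = \delta_{im}$, so $(\bm{M}^{(i)}\bm{v})_i = v_i = e_{ii} = 1$. On the other side, $v_i = e_{ii} = \delta_{ii} = 1$ since $\mathcal{P}(i,i) = \emptyset$ by the no-self-path convention; so the $k=i$ row of $\bm{M}^{(i)}\bm{v} = \bm{v}$ holds automatically and carries no content. Second, take $k \neq i$ with $\mathcal{P}(i,k) \neq \emptyset$. Then by part (ii), $M_{km}^{(i)} = 1/|H(\bullet,k)|$ precisely for $m \in H(\bullet,k)$ and $0$ otherwise, so $(\bm{M}^{(i)}\bm{v})_k = \sum_{m \in H(\bullet,k)} \frac{1}{|H(\bullet,k)|} e_{im}$, which equals $v_k = e_{ik}$ exactly when condition (ii) of Lemma~\ref{equivalent} holds at the pair $(i,k)$. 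Third, take $k \neq i$ with $\mathcal{P}(i,k) = \emptyset$. Then part (iii) gives $M_{km}^{(i)} = 0$ for all $m$, hence $(\bm{M}^{(i)}\bm{v})_k = 0$, while $v_k = e_{ik} = \delta_{ik} = 0$; so this row is again automatic, and it matches condition (i) of Lemma~\ref{equivalent} at $(i,k)$ (which there just reads $e_{ik} = 0$).

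Assembling these three cases: the rows $k$ with $k=i$ or with $\mathcal{P}(i,k)=\emptyset$ are satisfied by $\bm{v}$ regardless, and correspond to condition (i) of Lemma~\ref{equivalent}; the rows $k \neq i$ with $\mathcal{P}(i,k) \neq \emptyset$ are satisfied if and only if condition (ii) of Lemma~\ref{equivalent} holds. Hence $\bm{M}^{(i)}\bm{v} = \bm{v}$ is equivalent to the conjunction of (i) and (ii), which is the claim. The only subtlety — and the one point I would be careful to state explicitly — is the bookkeeping in the case $k = i$: one must check that when $i \in H(\bullet,k)$ for some other $k$, the term $m = i$ contributes $\frac{1}{|H(\bullet,k)|} v_i = \frac{1}{|H(\bullet,k)|} e_{ii}$ with $e_{ii} = 1$, which is exactly the $\delta_{ik}$ ``seed'' term appearing in the $i \in H(\bullet,j)$ branch of Lemma~\ref{equivalent}; this is what makes the matrix fixed-point formulation faithfully encode the recursion rather than dropping the contribution of the source vertex. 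No genuine obstacle is anticipated — the result is a direct translation — but clean case management around $k=i$ versus $k \neq i$, and the matching of ``$\mathcal{P}(i,k)=\emptyset$'' rows to condition (i), is where care is needed.
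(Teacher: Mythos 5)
Your proof is correct and is exactly the direct row-by-row verification that the paper's proof (which reads only ``Equivalence may be verified directly'') leaves to the reader; your three-way case split on $k=i$, $k\neq i$ with $\mathcal{P}(i,k)\neq\emptyset$, and $k\neq i$ with $\mathcal{P}(i,k)=\emptyset$ matches the intended argument, and your observation that the $m=i$ term contributes the $\delta$-seed via $e_{ii}=1$ is the right bookkeeping point.
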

\begin{proof}
Equivalence may be verified directly.
\end{proof}

\begin{lem}
Suppose $\bm{v}$ satisfies $\bm{M}^{(i)}\bm{v} = \bm{v}$. Then for all $k$ we must have $|v_k| \leq |v_i|$.
\label{nonzero}
\end{lem}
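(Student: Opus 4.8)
The plan is to exploit the special structure of $\bm{M}^{(i)}$, in particular that its $i$th row fixes $v_i$ and every other nonzero row is a convex combination (with equal weights summing to $1$) of the other entries of $\bm{v}$. So suppose $\bm{M}^{(i)}\bm{v} = \bm{v}$ and, for contradiction, that $\max_k |v_k| > |v_i|$. Let $M = \max_k |v_k|$ and choose an index $k^*$ attaining $|v_{k^*}| = M$; by assumption $k^* \neq i$, and $M > 0$.

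First I would dispose of the case $\mathcal{P}(i,k^*) = \emptyset$: then row $k^*$ of $\bm{M}^{(i)}$ is identically zero (case (iii) of the definition), so the fixed-point equation forces $v_{k^*} = 0$, contradicting $M > 0$. Hence $\mathcal{P}(i,k^*) \neq \emptyset$, and row $k^*$ falls under case (ii), giving
\begin{eqnarray}
v_{k^*} = \sum_{m \in H(\bullet,k^*)} \frac{1}{|H(\bullet,k^*)|} v_m.
\end{eqnarray}
Taking absolute values and applying the triangle inequality,
\begin{eqnarray}
M = |v_{k^*}| \leq \sum_{m \in H(\bullet,k^*)} \frac{1}{|H(\bullet,k^*)|} |v_m| \leq \sum_{m \in H(\bullet,k^*)} \frac{1}{|H(\bullet,k^*)|} M = M,
\end{eqnarray}
so both inequalities are equalities. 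Equality in the second forces $|v_m| = M$ for every parent $m \in H(\bullet,k^*)$; that is, the maximum-modulus set is closed under taking parents.

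Next I would propagate this backwards along a path. Since $\mathcal{P}(i,k^*) \neq \emptyset$, pick a path $P = (p_0,\dots,p_\ell)$ with $p_0 = i$, $p_\ell = k^*$. Walking from $p_\ell$ back to $p_0$: $p_{\ell-1} \in H(\bullet,p_\ell)$ so $|v_{p_{\ell-1}}| = M$; and $p_{\ell-1}$ itself satisfies $\mathcal{P}(i,p_{\ell-1}) \neq \emptyset$ (either it is $i$, or the truncated path witnesses it), so the same argument applies at $p_{\ell-1}$, giving $|v_{p_{\ell-2}}| = M$, and so on down to $|v_{p_0}| = |v_i| = M$ — unless we reach $i$ earlier, in which case we already conclude $|v_i| = M$. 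Either way $|v_i| = M = \max_k |v_k|$, contradicting $\max_k |v_k| > |v_i|$. Therefore $|v_k| \leq |v_i|$ for all $k$.

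The main obstacle, and the point needing a little care, is the backward induction step: one must check that every intermediate vertex $p_r$ on the chosen path inherits the property $\mathcal{P}(i,p_r) \neq \emptyset$ so that its row is genuinely of type (ii) rather than type (iii) — otherwise the averaging identity is unavailable at that vertex. This is immediate since the prefix $(p_0,\dots,p_r)$ is a path from $i$ to $p_r$, but it is worth stating explicitly; a path $P$ is required only to avoid revisiting $p_0$, which a prefix of $P$ still satisfies, so prefixes of valid paths are valid paths. One should also note the degenerate possibility that $p_0 = i$ is reached at some $r > 0$, i.e. the path returns to $i$; this only helps, as it immediately yields $|v_i| = M$.
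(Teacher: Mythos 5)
Your proof is correct and follows essentially the same route as the paper's: take a maximal-modulus entry violating the bound, rule out the zero-row case, use the convex-combination fixed-point identity to show the maximum propagates to all parents, and induct backwards along a path to reach $i$. The only difference is that you spell out the details the paper leaves implicit (the triangle-inequality equality case and the check that path prefixes are themselves paths, so every intermediate vertex has a type-(ii) row), which is a worthwhile clarification but not a different argument.
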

\begin{proof}
Suppose not. Then there exists $k$ such that (i) $|v_k|>|v_i|$ and (ii) $|v_k|$ is maximal.
Without loss of generality $\mathcal{P}(i,k) \neq \emptyset$ since otherwise $v_k = \sum_j M_{kj}^{(i)}v_j = \sum_j 0 v_j= 0$.
Then $|v_m| = |v_k|$ for all $m \in H(\bullet,k)$, since $v_k = \sum_m M_{km}^{(i)}v_m = \sum_{m \in H(\bullet,k)} \frac{1}{|H(\bullet,k)|} v_m$ and $\sum_{m \in H(\bullet,k)} \frac{1}{|H(\bullet,k)|}  = 1$.
By induction on a path from $i$ to $k$ we arrive at $|v_i| = |v_k|$, contradicting (i).
\end{proof}

\begin{lem}
There exists a unique solution to $\bm{M}^{(i)}\bm{v} = \bm{v}$ with $v_i = 1$.
\end{lem}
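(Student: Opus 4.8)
The plan is to establish existence and uniqueness separately. For existence, I would simply exhibit the vector $\bm{v}$ with $v_k = e_{ik}$: by Lemma~\ref{converge} the effects are well-defined with $e_{ik} \in [0,1]$, so in particular $v_i = e_{ii} = 1$ (since $\delta_{ii} = 1$ and $\mathcal{P}(i,i) = \emptyset$, there are no extra path contributions), and by Lemma~\ref{character} this $\bm{v}$ solves $\bm{M}^{(i)}\bm{v} = \bm{v}$. So existence is immediate from the earlier lemmas and requires essentially no new work.

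For uniqueness, suppose $\bm{v}$ and $\bm{w}$ both satisfy $\bm{M}^{(i)}\bm{v} = \bm{v}$, $\bm{M}^{(i)}\bm{w} = \bm{w}$ with $v_i = w_i = 1$. Set $\bm{u} = \bm{v} - \bm{w}$. Then $\bm{M}^{(i)}\bm{u} = \bm{u}$ by linearity, and $u_i = 0$. Now apply Lemma~\ref{nonzero} to $\bm{u}$: it gives $|u_k| \leq |u_i| = 0$ for all $k$, hence $\bm{u} = \bm{0}$ and $\bm{v} = \bm{w}$. This is the whole argument; the key observation is that the row of $\bm{M}^{(i)}$ indexed by $i$ is just $\delta_{im}$, so the constraint $v_i = 1$ is compatible with the fixed-point equation and pins down the free scaling that Lemma~\ref{nonzero} otherwise leaves open.

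The only point requiring a little care — and the step I would flag as the main (minor) obstacle — is making sure Lemma~\ref{nonzero} genuinely applies to the difference $\bm{u}$: Lemma~\ref{nonzero} is stated for any vector satisfying $\bm{M}^{(i)}\bm{v} = \bm{v}$, with no sign or nonnegativity hypothesis, and its proof is by a maximum-modulus/induction-along-a-path argument that is indifferent to whether the entries came from a difference of two solutions. So the application is legitimate, but I would state explicitly that we invoke Lemma~\ref{nonzero} with $\bm{u}$ in place of $\bm{v}$ to avoid any suggestion that positivity of effects was being used. I would then conclude by noting that combining this Lemma with Lemma~\ref{character} shows the effects $\{e_{ik}\}_k$ are the unique fixed point of $\bm{M}^{(i)}$ normalised at coordinate $i$, which is exactly what is needed to set up the power-iteration characterisation in Theorem~\ref{compute}.
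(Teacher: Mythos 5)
Your proof is correct. The uniqueness half is exactly the paper's argument: form the difference of two solutions, note it is again a fixed point of $\bm{M}^{(i)}$ with vanishing $i$th coordinate, and invoke Lemma~\ref{nonzero} (which, as you rightly flag, is a maximum-modulus statement with no positivity hypothesis, so it applies to the difference; the paper applies it to a difference in the same way). Where you diverge is existence: you exhibit the effects vector $v_k = e_{ik}$ directly, using Lemma~\ref{converge} for well-definedness, the observation $e_{ii} = \delta_{ii} = 1$ (valid since $H(i \rightarrow i) = \emptyset$ by convention), and Lemmas~\ref{equivalent}--\ref{character} to see it is a fixed point. The paper instead argues purely linear-algebraically: the $i$th row of $\bm{M}^{(i)}$ is the standard basis vector $\delta_{im}$, so $\bm{M}^{(i)} - \bm{I}$ has a zero row and $\lambda = 1$ is an eigenvalue; any associated eigenvector has $\tilde{v}_i \neq 0$ by Lemma~\ref{nonzero}, and normalising gives the required solution. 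Both routes are sound. Yours is shorter and leans on the analytic machinery already built in Lemmas~\ref{converge}--\ref{character}; the paper's is self-contained at the level of the matrix $\bm{M}^{(i)}$ and would establish existence even without the path-sum construction, which is mildly useful since the lemma is then a pure statement about $\bm{M}^{(i)}$ rather than about the effects. Either way the combination with Lemma~\ref{character} identifies $\{e_{ik}\}_k$ as the unique normalised fixed point, which is what Theorem~\ref{compute} needs.
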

\begin{proof}
{\it Existence:}
The $i$th row of $\bm{M}^{(i)}$ contains only zeros, except for the $i$th entry which is equal to 1. Hence the characteristic polynomial contains a root (eigenvalue) $\lambda = 1$ with corresponding eigenvector $\tilde{\bm{v}} \neq \bm{0}$. From Lemma \ref{nonzero} we must have $v_{i} \neq 0$ (else $|\tilde{v}_j| \leq |\tilde{v}_i| = 0$ for all $j$, implying that $\tilde{\bm{v}} = \bm{0}$). Taking $\bm{v} = \tilde{\bm{v}} / v_{ii}$ proves existence.

{\it Uniqueness:} Suppose $\bm{v}$ and $\bm{v}'$ both satisfy the criteria. Then $\bm{v}'' = \bm{v}-\bm{v}'$ is also an eigenvector of $\bm{M}^{(i)}$ with unit eigenvalue. From Lemma \ref{nonzero} we have $|v_k''| \leq |v_i''| = 0$ for all $k$, implying that $\bm{v} = \bm{v}'$.
\end{proof}

\begin{proof}[Proof of Theorem \ref{compute}]
Together Lemmas 1-5 prove that the effect matrix $\bm{e} = \{e_{ij}\}$ exists and is unique. Moreover, the $i$th row $\bm{e}_i$ of $\bm{e}$ may be computed as the unique eigenvector of $\bm{M}^{(i)}$ corresponding to unit eigenvalue.

For moderate dimensional matrices $\bm{M}^{(i)}$, eigenvectors may be calculated using any valid algorithm. One such algorithm, suitable for high dimensional problems, is the ``power iteration'' method. Specifically we compute the leading eigenvector $\bm{v}$ of $\bm{M}^{(i)}$ as
\begin{eqnarray}
\bm{e}_i = \lim_{n \rightarrow \infty} \bm{v}^{(n)}, \; \; \; \bm{v}^{(n)} = \frac{\bm{M}^{(i)}\bm{v}^{(n-1)}}{\|\bm{M}^{(i)}\bm{v}^{(n-1)}\|}, \; \; \; v^{(0)}_k = \delta_{ik}.
\end{eqnarray}
Efficient computation in this setting is reviewed in \cite{Berkhin}.
\end{proof}

\newpage
\pagenumbering{gobble}
\section{Supplement to ``Quantifying the Multi-Scale Performance of Network Inference Algorithms''}

\subsection{AUC Convention} \label{AUC con}
For MSS2 only, the analogue of the confusion matrix does not respect monotonicity of the entries TP, FP, TN, FN with respect to varying threshold parameter $\tau$. As a consequence both the ROC and PR curves constructed by interpolation have the potential to exhibit self-intersection and hence there may exist estimates $\hat{G}$ such that the area $S$ under the associated ROC and PR curves (AUC) need not be bounded by $0 \leq S \leq 1$, invalidating the definition of a score function having image in $[0,1]$.
To construct a consistent performance measure we instead define AUC as follows:
Let $(x_1,y_1) \rightarrow (x_2,y_2) \rightarrow \dots \rightarrow (x_m,y_m)$ denote the coordinates of points on the curve whose AUC is to be calculated. Reorder the sequence as $(x_{n_1},y_{n_1}) \rightarrow (x_{n_2},y_{n_2}) \rightarrow \dots \rightarrow (x_{n_m},y_{n_m})$ such that the x-coordinates form an increasing sequence $x_{n_1} \leq x_{n_2} \leq \dots \leq x_{n_m}$.
Then AUC is defined using the trapezium rule as $S = \sum_{i = 2}^m \frac{y_{n_{i-1}}+y_{n_i}}{2} \times (x_{n_i}-x_{n_{i-1}})$.
Note that whilst an ordering $\{x_{n_j}\}$ may in general be non-unique, the expression for $S$ does not depend on which particular ordering is selected.

\subsection{The ``net assess'' Package}
The \verb+net_assess+ package for MATLAB R2013b is provided to compute all of the scores and associated $p$-values discussed in this paper.
In the case of MSS2 we compute ROC and PR curves based on 100 uniformly spaced thresholds $\tau_i$ defined implicitly by $|E(\hat{G}^{\tau_i})| = (i/100) \times |E(\hat{G})|$.
This package can be used as shown in the following example:

\verb+A2 = rand(100)>0.98; \% benchmark network+

\verb+A1 = 0.5*rand(100)+ + \verb+0.5*rand(100).*A2; \% estimate+

\verb+score = 1; \% 1 = local, 2 = MSS1, 3 = MSS2+

\verb+[AUROC,AUPR] = net_assess(score,A1,A2)+

\verb+MC_its = 100; \% num. Monte-Carlo samples for p-val. computation+

\verb+[AUROC,AUPR,p_AUROC,p_AUPR] = net_assess(score,A1,A2,MC_its)+

Typical computational times for the package, based on the DREAM5 data analysis performed in this paper, are displayed in Supp. Fig.~\ref{times}.
Note that local scores require $\mathcal{O}(p^2)$ operations to evaluate, whereas MSS1 requires $\mathcal{O}(p^3)$ and MSS2 requires $\mathcal{O}(p^4)$.

\subsection{Overall Score (Marbach {\it et al.})} \label{overall def}

We briefly summarise the approach of \cite{Marbach2012} that assigns each network inference algorithm an ``Overall Score''.
For each network inference algorithm a combined AUROC score was calculated as the (negative logarithm of the) geometric mean over $p$-values corresponding to 3 individual AUROC scores ({\it in silico}, {\it E. Coli}, {\it S. Cerevisiae} respectively). Likewise a combined AUPR score was assigned to each network inference algorithm. For each algorithm an Overall Score was computed as the arithmetic mean of the combined AUROC and combined AUPR scores.
Note that \cite{Marbach2012} report the Overall Score over all 3 datasets, whereas the performance scores which we report in this paper were computed for each dataset individually.
Note also that $p$-values used by \cite{Marbach2012} were computed using a common null distribution, rather than the estimator-specific nulls $\mathcal{M}_0(\hat{G})$ that are used in this paper.

\subsection{Additional Results} \label{AUROC results}

Supp. Figs. \ref{silico bar}, \ref{ecoli bar}, \ref{sc bar} display the scores $S$ obtained by each of the 36 dream methodologies on, respectively, the {\it in silico}, {\it E. Coli} and {\it S. Cerevisiae} datasets.
Supp. Fig.~\ref{DREAM2} compares results for the DREAM5 Challenge data based on both local and multi-scale scores and ROC analysis. Supp. Fig.~\ref{meta5} displays both inferred benchmark topology for the {\it in silico} dataset using {\it Regression 2}.

\newpage

\begin{figure}
\centering
\includegraphics[width = \textwidth,clip,trim = 4cm 0cm 0cm 1cm]{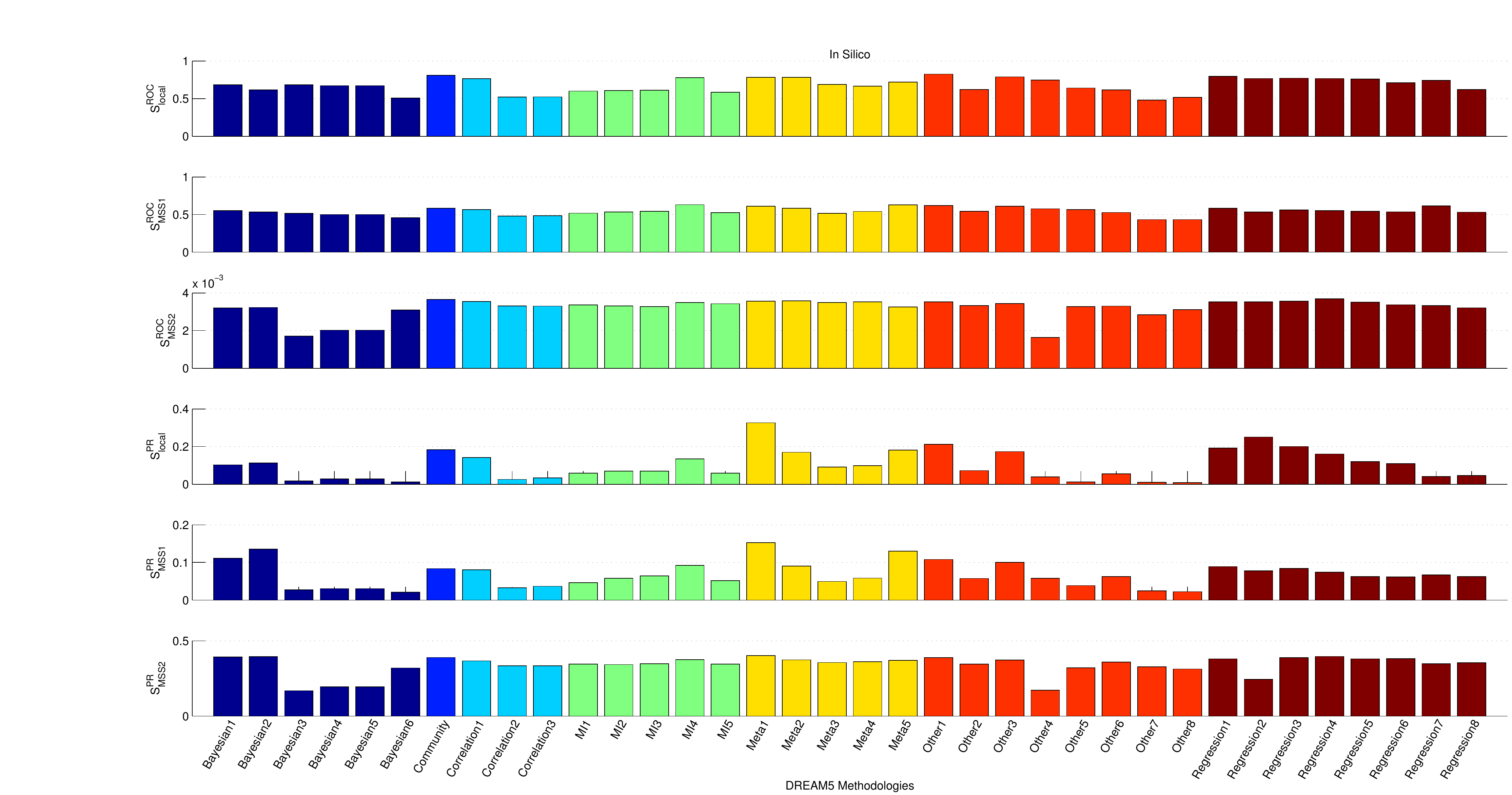}
\caption{DREAM5 Challenge data; results on the {\it in silico} dataset.
 [The 36 DREAM5 methodologies were assessed using both local and multi-scale performance scores. Participants were grouped according to their statistical formulation as either {\it Bayesian}, {\it Correlation}, {\it Meta}, {\it Mutual Information} (MI), {\it Regression} or {\it Other}.
{\it Community} represents a crowd-sourced network estimator proposed by \cite{Marbach2012}.
Performance scores include area under the receiver operating characteristic (AUROC) and precision recall (AUPR) curves, based on both local scores and the proposed multi-scale scores (MSS). The Overall Score of \cite{Marbach2012} combines both local AUROC and AUPR scores as described in Supp. Sec.~\ref{overall def}.]}
\label{silico bar}
\end{figure}

\begin{figure}
\centering
\includegraphics[width = \textwidth,clip,trim = 4cm 0cm 0cm 1cm]{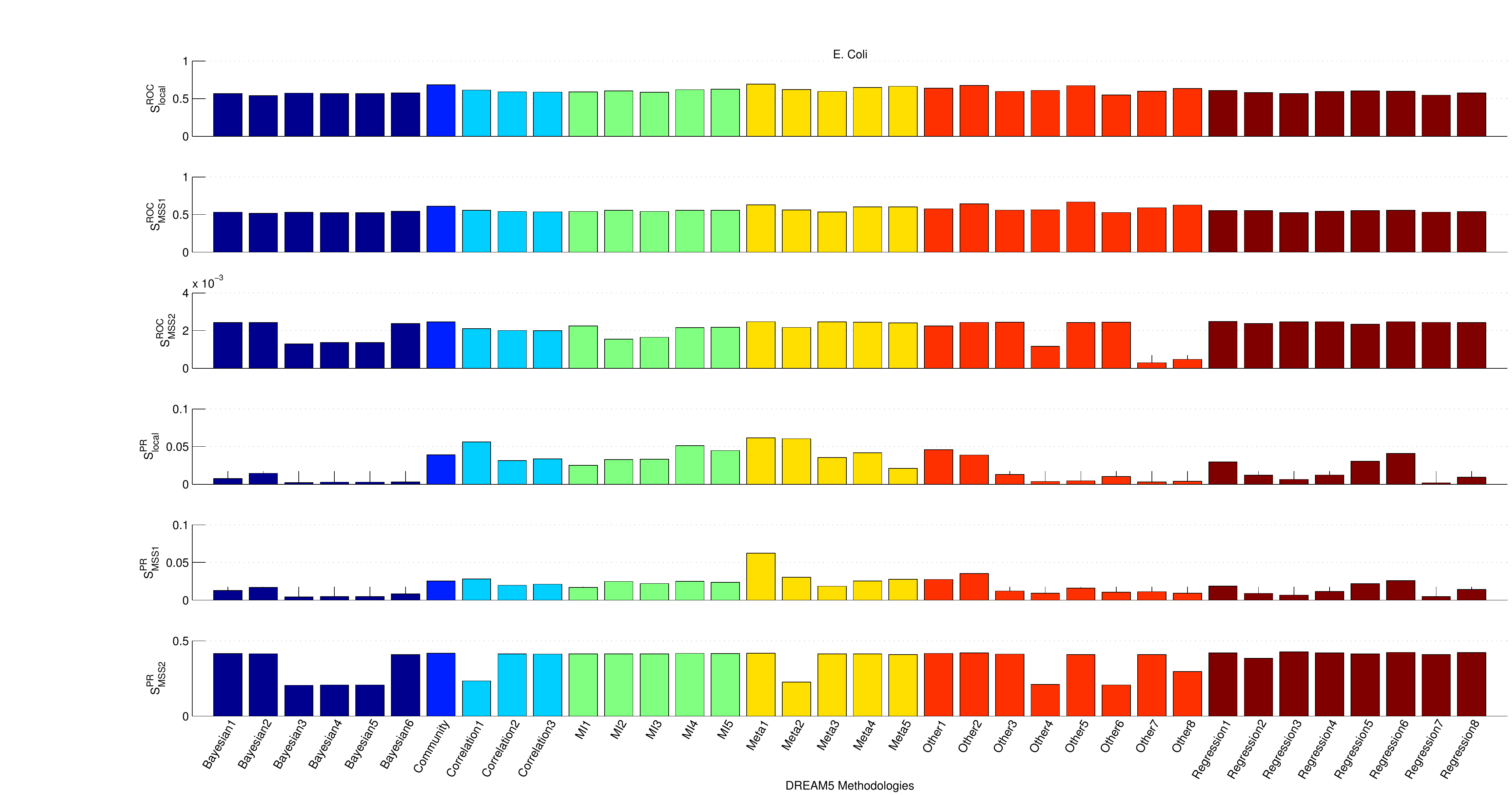}
\caption{DREAM5 Challenge data; results on the {\it E. Coli} dataset.
 [The 36 DREAM5 methodologies were assessed using both local and multi-scale performance scores. Participants were grouped according to their statistical formulation as either {\it Bayesian}, {\it Correlation}, {\it Meta}, {\it Mutual Information} (MI), {\it Regression} or {\it Other}.
{\it Community} represents a crowd-sourced network estimator proposed by \cite{Marbach2012}.
Performance scores include area under the receiver operating characteristic (AUROC) and precision recall (AUPR) curves, based on both local scores and the proposed multi-scale scores (MSS). The Overall Score of \cite{Marbach2012} combines both local AUROC and AUPR scores as described in Supp. Sec.~\ref{overall def}.]}
\label{ecoli bar}
\end{figure}

\begin{figure}
\centering
\includegraphics[width = \textwidth,clip,trim = 4cm 0cm 0cm 1cm]{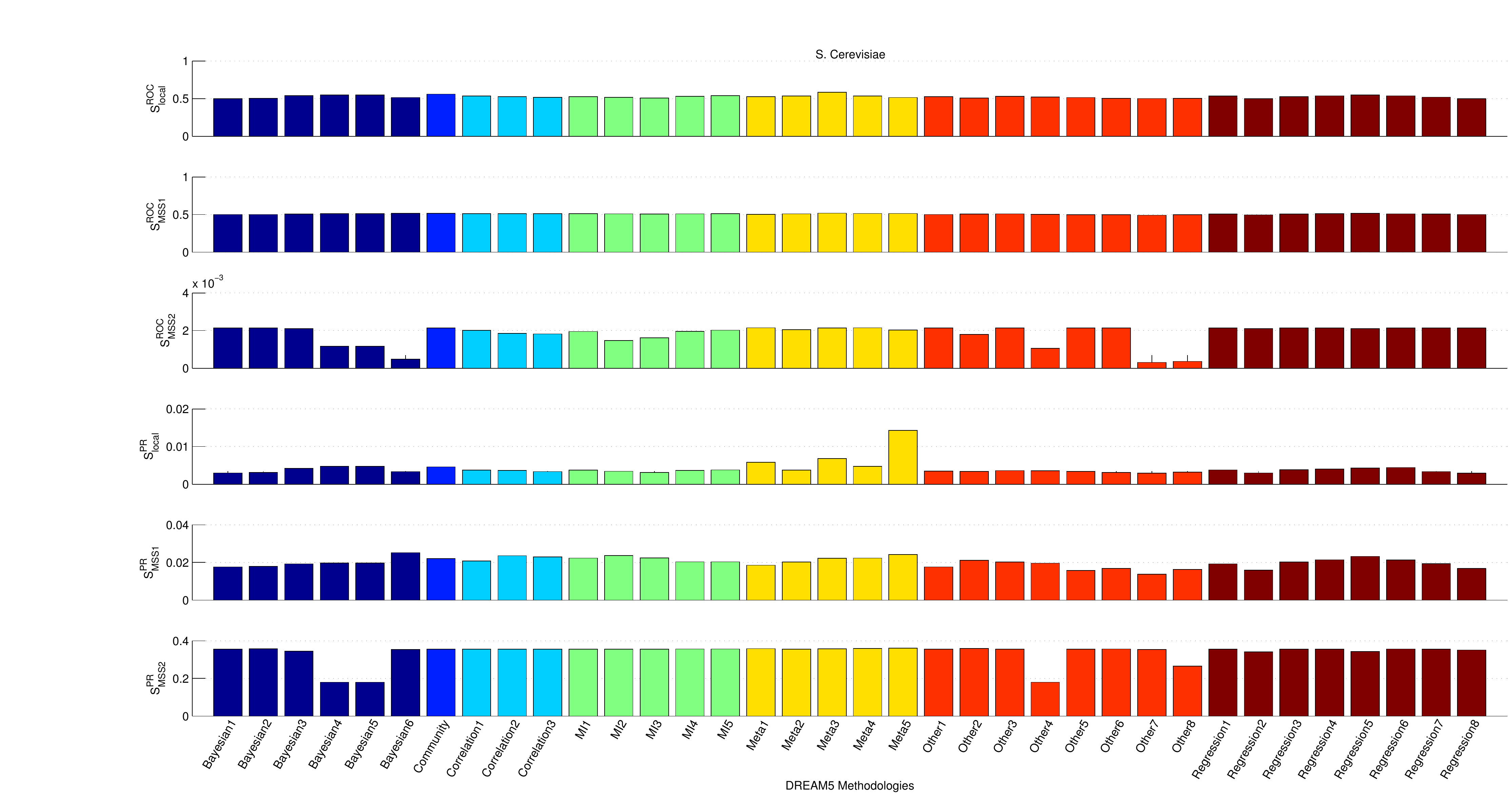}
\caption{DREAM5 Challenge data; results on the {\it S. Cerevisiae} dataset.
 [The 36 DREAM5 methodologies were assessed using both local and multi-scale performance scores. Participants were grouped according to their statistical formulation as either {\it Bayesian}, {\it Correlation}, {\it Meta}, {\it Mutual Information} (MI), {\it Regression} or {\it Other}.
{\it Community} represents a crowd-sourced network estimator proposed by \cite{Marbach2012}.
Performance scores include area under the receiver operating characteristic (AUROC) and precision recall (AUPR) curves, based on both local scores and the proposed multi-scale scores (MSS). The Overall Score of \cite{Marbach2012} combines both local AUROC and AUPR scores as described in Supp. Sec.~\ref{overall def}.]}
\label{sc bar}
\end{figure}

\begin{figure}
\centering
\includegraphics[width = \textwidth]{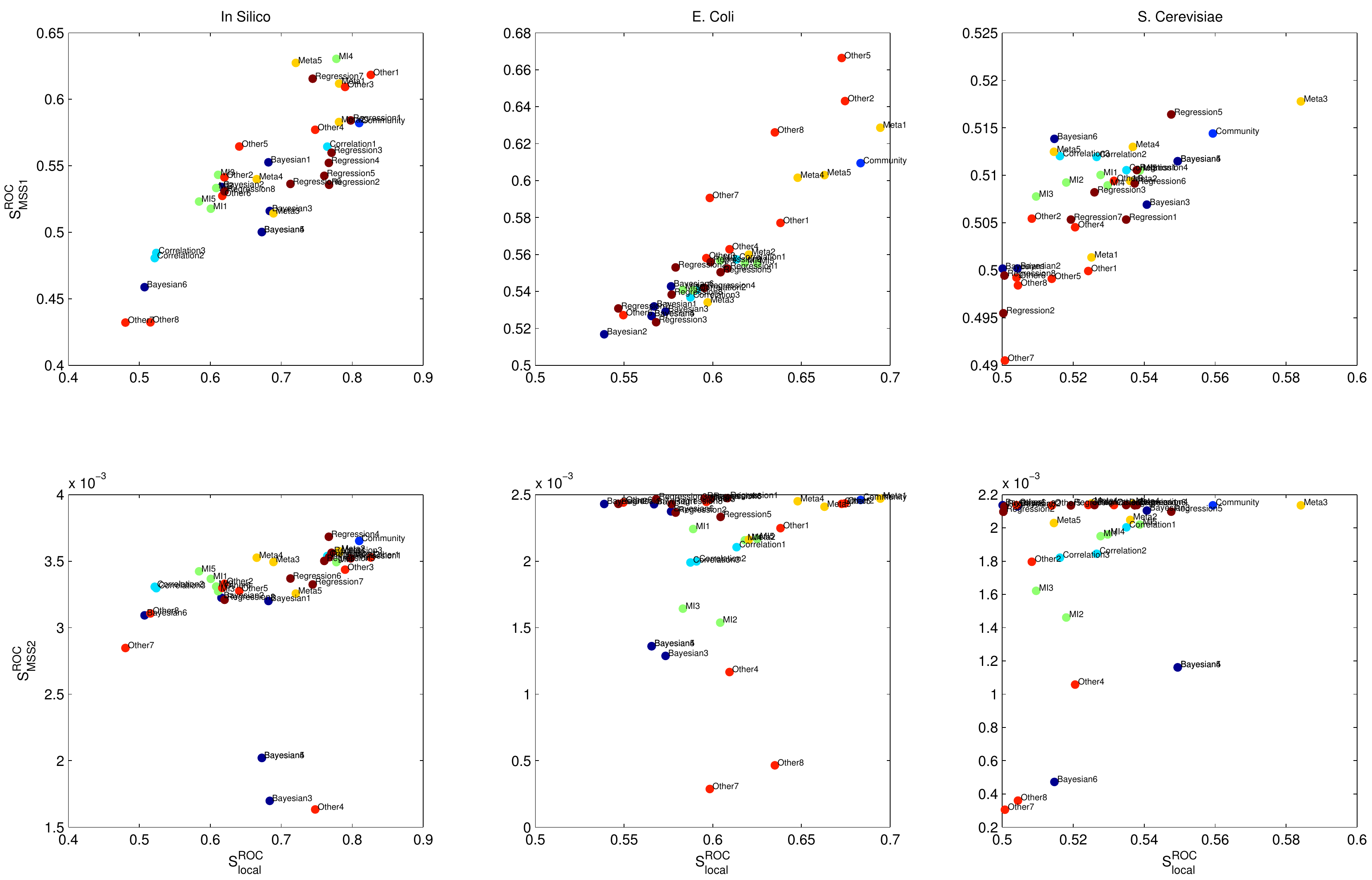}
\caption{DREAM5 network inference challenge data. [The 36 DREAM5 methodologies were assessed using both local and multi-scale performance scores. Participants were grouped according to their statistical formulation as either {\it Bayesian}, {\it Correlation}, {\it Meta}, {\it Mutual Information} (MI), {\it Regression} or {\it Other}.
{\it Community} represents a crowd-sourced network estimator proposed by \cite{Marbach2012}.
Left to right: {\it In silico} dataset, {\it E. Coli} dataset, {\it S. Cerevisiae} dataset.
Here we show results based on area under the receiver operating characteristic curve.]}
\label{DREAM2}
\end{figure}

\begin{figure}
\centering
\begin{subfigure}{\textwidth}
\pgfdeclarelayer{foreground}
\pgfdeclarelayer{background}
\pgfsetlayers{background,main,foreground}
\begin{tikzpicture}
\begin{pgfonlayer}{foreground}
\draw [red] (5.4,0.8) rectangle (8.4,2);
\end{pgfonlayer}
\begin{pgfonlayer}{background}
\centering
\path (0,0) node (o) {
\includegraphics[width = \textwidth]{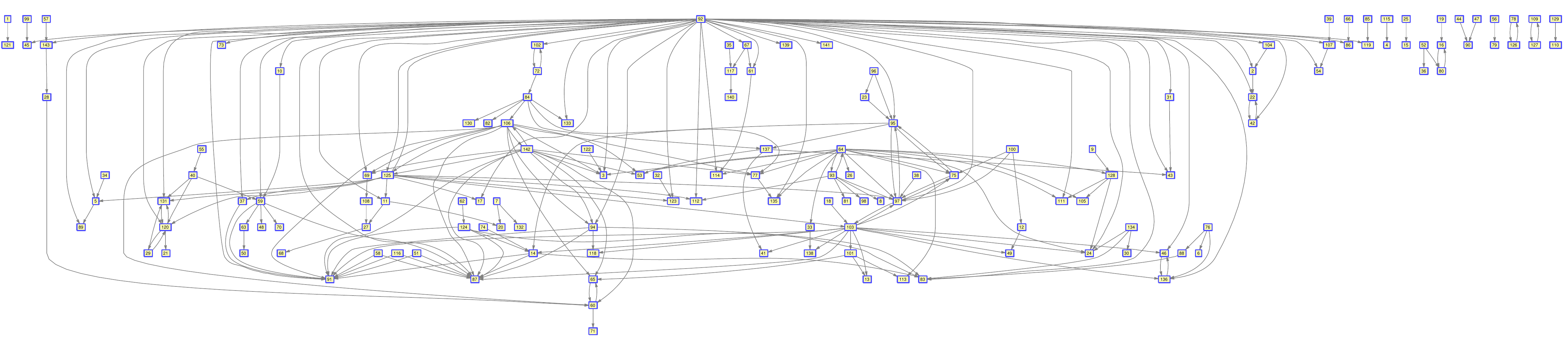}};
\end{pgfonlayer}
\end{tikzpicture}
\caption{{\it In silico} data-generating network}
\end{subfigure}
\begin{subfigure}{\textwidth}
\pgfdeclarelayer{foreground}
\pgfdeclarelayer{background}
\pgfsetlayers{background,main,foreground}
\begin{tikzpicture}
\begin{pgfonlayer}{foreground}
\draw [red] (5.4,0.7) rectangle (8.4,1.9);
\end{pgfonlayer}
\begin{pgfonlayer}{background}
\centering
\path (0,0) node (o) {
\includegraphics[width = \textwidth]{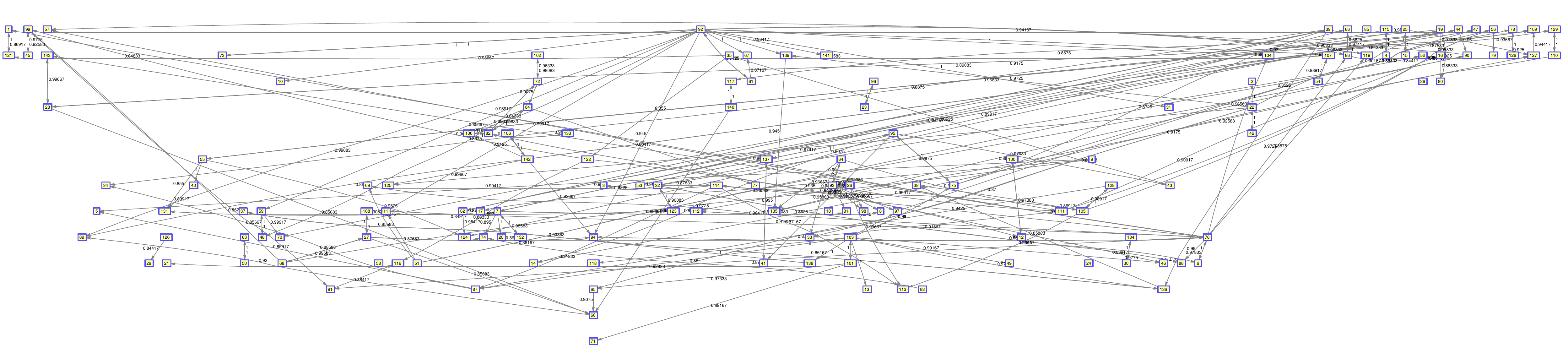}};
\end{pgfonlayer}
\end{tikzpicture}
\caption{Network topology inferred by {\it Regression2}}
\end{subfigure}
\caption{Global connectivity patterns in the DREAM5 {\it in silico} chellenge; {\it Regression2} is unable to identify disconnected components (red box) in the data-generating network. [High definition available in electronic version.] (a) {\it In silico} data-generating network, based on known transcriptional regulatory networks of model organisms \citep{Marbach2009}. (b) Network topology inferred by {\it Regression2}; an approach that combines steady-state and time-series data using a group LASSO, followed by bootstrapping.
[Edges and associated weights are shown for the highest weights only, such that both networks (a) and (b) contain an equal number of edges.]}
\label{meta5}
\end{figure}

\begin{figure}
\centering
\includegraphics[width = 0.5\textwidth]{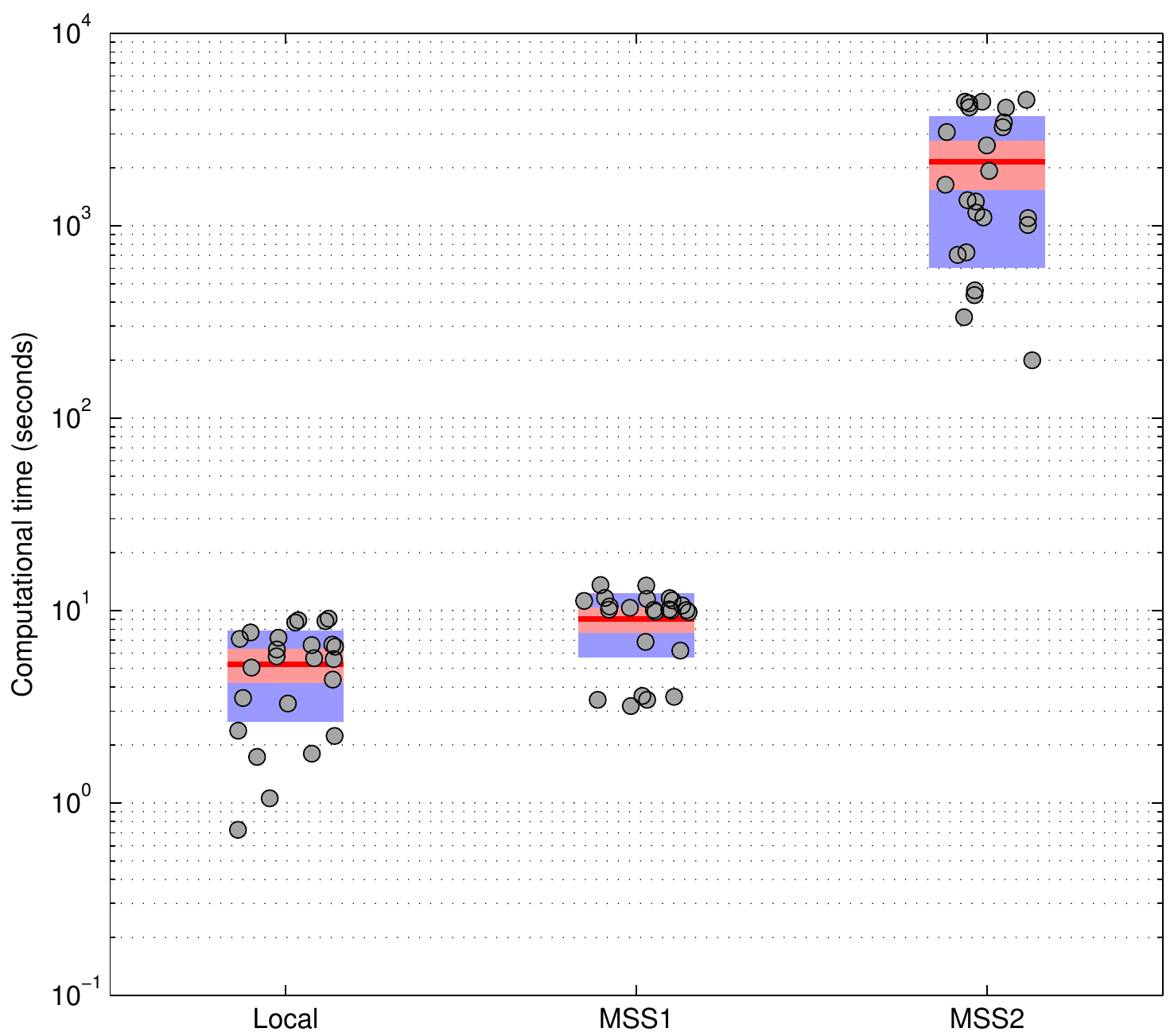}
\caption{Computational times for the DREAM5 Challenge data. [Red; 95\% confidence intervals for the mean. Blue; inter-quartile range. The full data are shown as points.]}
\label{times}
\end{figure}

\end{document}